\newcommand{ \dn }[ 1 ]{ \boldsymbol #1 }
\newcommand{ \mt }[ 1 ]{ \mathrm #1 }
\newcommand{ \be }{ \begin{equation} }
\newcommand{ \ee }{ \end{equation} }
\newcommand{ \bmath }{ \begin{displaymath} }
\newcommand{ \emath }{ \end{displaymath} }
\newcommand{ \by }{ \dn{ y } }
\newcommand{ \ident }{ \mt{ I }_{ n } }
\newcommand{ \identst }{ \mt{ I }_{ \nstar } }
\newcommand{ \nstar }{ n^* }
\newcommand{ \bbl }{ \dn{ \beta }_{ \ell } }
\newcommand{ \bbo }{ \dn{ \beta }_{ 0 } }
\newcommand{ \bx }{ \mt{ X } }
\newcommand{ \bxl }{ \bx_{\ell} }
\newcommand{ \bxo }{ {\bx}_{0} }
\newcommand{ \bboup }{ \overline{\overline{ \dn{ \beta }} }_{ 0 } }
\newcommand{ \varl }{ \sigma_\ell^2 }
\newcommand{ \varo }{ \sigma_0^2 }
\newcommand{ \xtxstarl }{ \big( {\mt{X}_\ell^{*}}^T \mt{X}_\ell^* \big) }
\newcommand{ \invxtxstarl }{ \xtxstarl^{-1} }
\newtheoremstyle{mytheoremstyle}{3pt}{3pt}{\itshape}{}{\scshape}{:}{0.5em}{}
\theoremstyle{mytheoremstyle}
\newtheorem{theorem}{Theorem}
\newtheorem{lemma}{Lemma}
\renewcommand{\citename}{\citeauthor*}
\begin{document}

\normalem

\title{\vspace*{-0.7in} \textbf{Power-Expected-Posterior Priors \\ for
Variable Selection \\ in Gaussian Linear Models}}

\author{
D.~Fouskakis\thanks{D.~Fouskakis is with the Department of Mathematics,
National Technical University of Athens, Zografou Campus, Athens 15780
Greece; email \texttt{fouskakis@math.ntua.gr}}, \
I.~Ntzoufras\thanks{I.~Ntzoufras is with the Department of Statistics,
Athens University of Economics and Business, 76 Patision Street, Athens
10434 Greece; email \texttt{ntzoufras@aueb.gr}} \ and
D.~Draper\thanks{D.~Draper is with the Department of Applied Mathematics
and Statistics, Baskin School of Engineering, University of California,
1156 High Street, Santa Cruz CA 95064 USA; email
\texttt{draper@ams.ucsc.edu} and with the \textit{Center of
Excellence in Statistical Research}, \textsc{ebay research labs}, San Jose
CA USA}
}

\date{\small May 15, 2014}

\maketitle

\vspace*{-0.15in}

\noindent
\textbf{Summary:} In the context of the expected-posterior prior (EPP)
approach to Bayesian variable selection in linear models, we combine ideas
from power-prior and unit-information-prior methodologies to simultaneously
(a) produce a minimally-informative prior and (b) diminish the effect of
training samples. The result is that in practice our
\textit{power-expected-posterior} (PEP) methodology is sufficiently
insensitive to the size $n^*$ of the training sample, due to PEP's
unit-information construction, that one may take $n^*$ equal to the
full-data sample size $n$ and dispense with training samples altogether. 
This promotes stability of the resulting Bayes factors, removes the
arbitrariness arising from individual training-sample selections, and
greatly increases computational speed, allowing many more models to be
compared within a fixed CPU budget. In this paper we focus on Gaussian
linear models and develop our PEP methodology under two different baseline
prior choices: the independence Jeffreys (or reference) prior, yielding the
\textit{J-PEP} posterior, and the Zellner $g$-prior, leading to
\textit{Z-PEP}. We find that, under the reference baseline prior, the
asymptotics of PEP Bayes factors are equivalent to those of Schwartz's BIC
criterion, ensuring consistency of the PEP approach to model selection. We
compare the performance of our method, in simulation studies and a real
example involving prediction of air-pollutant concentrations from
meteorological covariates, with that of a variety of previously-defined
variants on Bayes factors (and other methods) for objective variable
selection. Our PEP prior, due to its unit-information structure, leads to a
variable-selection procedure that --- in our empirical studies --- (1) is
systematically more parsimonious than the basic EPP with minimal training
sample, while sacrificing no desirable performance characteristics to
achieve this parsimony; (2) is robust to the size of the training sample,
thus enjoying the advantages described above arising from the avoidance of
training samples altogether; and (3) identifies maximum-a-posteriori models
that achieve better out-of-sample predictive performance than that provided
by standard EPPs, the $g$-prior, the hyper-$g$ prior, non-local priors, the
LASSO and SCAD. Moreover, PEP priors are diffuse even when $n$ is not much
larger than the number of covariates $p$, a setting in which EPPs can be
far more informative than intended.

\vspace*{0.15in}

\noindent
\textit{Keywords:} Bayesian variable selection; Bayes factors; Consistency;
Expected-posterior priors; Gaussian linear models; $g$-prior; Hyper-$g$
prior; LASSO; Non-local priors; Prior compatibility; Power-prior; Training
samples; SCAD; Unit-information prior

\vspace*{0.15in}

\noindent
\textit{Note:} A glossary of abbreviations is given before the references
at the end of the paper.

\section{Introduction}

\label{introduction}

A leading approach to Bayesian variable selection in regression models is
based on posterior model probabilities and the corresponding posterior
model odds, which are functions of Bayes factors. In the case of Gaussian
regression models, on which we focus in this paper, an active area of
research has emerged from attempts to use improper prior distributions in
this approach; influential contributions include a variety of Bayes-factor
variants (\textit{posterior}, \textit{fractional} and \textit{intrinsic}:
see, e.g., \cite{aitkin_91}, \cite{ohagan_95}, and
\citename{berger_pericchi_96a}
(\citeyear{berger_pericchi_96a,berger_pericchi_96b}), respectively).

An important part of this work is focused on \textit{objective model
selection methods} (\cite{casella_moreno_2006}; \cite{moreno_giron_2008};
\cite{casella_etal_2009}), having their source in the intrinsic priors
originally introduced by \cite{berger_pericchi_96b}; these methods attempt
to provide an approximate proper Bayesian interpretation for intrinsic
Bayes factors (IBFs). Intrinsic priors can be considered as special cases
of the \textit{expected-posterior prior} (EPP) distributions of
\cite{perez_berger_2002}, which have an appealing interpretation based on
imaginary training data coming from prior predictive distributions. EPP
distributions can accommodate improper \textit{baseline} priors as a
starting point, and the marginal likelihoods for all models are calculated
up to the same normalizing constant; this overcomes the problem of
indeterminacy of the Bayes factors. Moreover, as
\cite{consonni_veronese_2008} note, ``EPP is a method to make priors
compatible across models, through their dependence on a common marginal
data distribution; thus this methodology can be applied also with
subjectively specified (proper) prior distributions." However, in
regression problems, the approach is based on one or more \textit{training
samples} chosen from the data, and this raises three new questions: how
large should such training samples be, how should they be chosen, and how
much do they influence the resulting posterior distributions?

In this paper we develop a minimally-informative prior and simultaneously
diminish the effect of training samples on the EPP approach, by combining
ideas from the power-prior method of \cite{ibrahim_chen_2000} and the
unit-information-prior approach of \cite{kass_wasserman_95}: we raise the
likelihood involved in the EPP distribution to the power $\frac{ 1 }{ n }$
(where $n$ denotes the sample size), to produce a prior information content
equivalent to one data point. In this manner the effect of the
imaginary/training sample is small with even modest $n$. Moreover, as will
become clear in Section \ref{examples}, in practice our
\textit{power-expected-posterior} (PEP) prior methodology, due to its
low-information structure, is sufficiently insensitive to the size $n^*$ of
the training sample that one may take $n^* = n$ and dispense with training
samples altogether; this both removes the instability arising from the
random choice of training samples and greatly reduces computation time.

As will be seen, PEP priors have an additional advantage over standard EPPs
in settings, which arise with some frequency in disciplines such as
bioinformatics/genomics (e.g., \cite{national2005Mathematics}) and
econometrics (e.g., \cite{johnstone-titterington-2009}), in which $n$ is not
much larger than the number of covariates $p$: standard EPPs can be far more
informative than intended in such situations, but the unit-information
character of PEP priors ensures that this problem does not arise with the
PEP approach.

PEP methodology can be implemented under any baseline prior choice, proper
or improper. In this paper, results are presented for two different prior
baseline choices: the Zellner $g$-prior and the independence Jeffreys
prior. The conjugacy structure of the first of these choices (a) greatly
increases calculation speed and (b) permits computation of the first two
moments (see Section 1 of the web Appendix) of the resulting PEP prior,
which offers flexibility in situations in which non-diffuse parametric
prior information is available. When (on the other hand) little
information, external to the present data set, about the parameters in the
competing models is available, the PEP prior with the independence Jeffreys
(or reference) baseline prior can be viewed as an \textit{objective
model-selection} technique, and the fact that the PEP posterior with the
Jeffreys baseline is a special case of the posterior with the $g$-prior as
baseline provides significant computational acceleration using the Jeffreys
baseline.

With either choice of baseline prior, simple but efficient Monte-Carlo
schemes for the estimation of the marginal likelihoods can be constructed
in a straightforward manner. We find that the corresponding Bayes factors,
under the reference baseline prior, are asymptotically equivalent to those
of the BIC criterion \citep{schwarz_78}; therefore the resulting PEP
objective Bayesian variable-selection procedure is consistent.

We wish to emphasize two points, at the outset, regarding our intentions in
developing PEP.

\begin{itemize}

\item

The purpose of the paper is not to compare the performance of PEP priors
with that of approaches such as mixtures of $g$-priors (e.g.,
\cite{liang_etal_2008}) or BIC itself. The point here is to begin with
EPPs, which are in wide use and which have the important property of
compatibility across models (a feature that mixtures of $g$-priors lack),
and to substantially improve EPPs by overcoming the difficulties that arise
from their dependence on training samples.

\item

The paper focuses on a variable-selection problem in the class of linear
models with fixed covariate space, where the number of available covariates
is less than the sample size ($p < n$); we do not intend this method to be
used in settings in which $p > n$.

\end{itemize}

The plan of the remainder of the paper is as follows. In the next two
sub-Sections, to fix notation and ideas, we provide some preliminary details
on the EPP approach, and we highlight difficulties that arise when
implementing it in variable-selection problems. Our PEP prior methodology is
described in detail in Section \ref{sec_pep}, and the resulting prior and
posterior distributions are presented under the two different baseline prior
choices mentioned above. In Section \ref{marginal_likelihood_pep_gen} we
provide Monte-Carlo estimates of the marginal likelihood for our approach. 
Section \ref{Limit} explores the limiting behavior of the resulting Bayes
factors, under the reference baseline prior. In Section \ref{examples} we
present illustrations of our method, under both baseline prior choices, in a
simulation experiment and in a real-data example involving the prediction of
atmospheric ozone levels from meteorological covariates; we also compare PEP
with seven other variable-selection and coefficient-shrinkage methods on
out-of-sample predictive performance. Finally, Section \ref{sec_discussion}
concludes the paper with a brief summary and some ideas for further
research. 

\subsection{Expected-posterior priors}

\label{expected-posterior-priors}

\cite{perez_berger_2002} developed priors for use in model comparison,
through utilization of the device of ``imaginary training samples"
(\cite{good}; \cite{spiegelhalter_smith_80}; \cite{iwaki}). They defined the
expected-posterior prior (EPP) as the posterior distribution of a parameter
vector for the model under consideration, averaged over all possible
imaginary samples $\dn{ y }^*$ coming from a ``suitable" predictive
distribution $m^* ( \dn{ y }^* )$. Hence the EPP for the parameter vector
$\dn{ \theta }_\ell$ of any model $M_\ell \in \cal{ M }$, with ${ \cal M }$
denoting the model space, is
\begin{equation} \label{epp} 
\pi^E_\ell ( \dn{ \theta }_\ell ) = \int \pi_\ell^N ( \dn{ \theta }_\ell |
\dn{ y }^* ) \, m^* ( \dn{ y }^* ) \, d \dn{ y }^* \, , 
\end{equation} 
where $\pi_\ell^N ( \dn{ \theta }_\ell | \dn{ y }^* )$ is the posterior
$\dn{ \theta }_\ell$ using a baseline prior $\pi_\ell^N ( \dn{ \theta }_\ell
)$ and data $\dn{ y }^*$.

A question that naturally arises when using EPPs is which predictive
distribution $m^*$ to employ for the imaginary data $\dn{ y }^*$ in
(\ref{epp}); \cite{perez_berger_2002} discussed several choices for $m^*$.
An attractive option, leading to the so-called \textit{base-model
approach}, arises from selecting a ``reference" or ``base" model $M_0$ for
the training sample and defining $m^* ( \dn{ y }^* ) = m_0^N ( \dn{ y }^* )
\equiv f ( \dn{ y }^* | M_0)$ to be the prior predictive distribution,
evaluated at $\dn{ y }^*$, for the reference model $M_0$ under the baseline
prior $\pi_0^N ( \dn{ \theta }_0 )$. Then, for the reference model (i.e.,
when $M_\ell = M_0$), (\ref{epp}) reduces to $\pi^E_0 ( \dn{ \theta }_0 ) =
\pi_0^N ( \dn{ \theta }_0 )$. Intuitively, the reference model should be at
least as simple as the other competing models, and therefore a reasonable
choice is to take $M_0$ to be a common sub-model of all $M_\ell \in { \cal
M }$. This interpretation is close to the skeptical-prior approach
described by \citet[Section 5.5.2]{spiegelhalter_abrams_myles}, in which a
tendency toward the null hypothesis can be a-priori supported by centering
the prior around values assumed by this hypothesis when no other
information is available. In the variable-selection problem that we
consider in this paper, the constant model (with no predictors) is clearly
a good reference model that is nested in all the models under
consideration. This selection makes calculations simpler, and additionally
makes the EPP approach essentially equivalent to the arithmetic intrinsic
Bayes factor approach of \cite{berger_pericchi_96a}.

One of the advantages of using EPPs is that impropriety of baseline priors
causes no indeterminacy. There is no problem with the use of an improper
baseline prior $\pi_\ell^N ( \dn{ \theta }_\ell )$ in (\ref{epp}); the
arbitrary constants cancel out in the calculation of any Bayes factor.
Impropriety in $m^*$ also does not cause indeterminacy, because $m^*$ is
common to the EPPs for all models. When a proper baseline prior is used, the
EPP and the corresponding Bayes factors will be relatively insensitive to
large values of the prior variances of the components of $\dn{ \theta
}_\ell$.

\subsection{EPPs for variable selection in Gaussian linear models}

\label{epp-gaussian-regression}

In what follows, we examine variable-selection problems in Gaussian
regression models. We consider two models $M_\ell$ (for $\ell = 0, 1$) with
parameters $\dn{ \theta }_\ell = ( \dn{ \beta }_\ell \, , \sigma_\ell^2 )$
and likelihood specified by
\begin{equation} \label{new2-1}
( \dn{ Y } | \mt{ X }_\ell, \dn{ \beta }_\ell, \sigma_\ell^2, M_\ell ) \sim
N_n ( \mt{ X }_\ell \, \dn{ \beta }_\ell \, , \sigma_\ell^2 \, \mt{ I }_n )
\, ,
\end{equation}
where $\dn{ Y } = ( Y_1, \dots, Y_n )$ is a vector containing the
(real-valued) responses for all subjects, $\mt{ X }_\ell$ is an $n \times
d_\ell$ design matrix containing the values of the explanatory variables in
its columns, $\mt{ I }_n$ is the $n \times n$ identity matrix, $\dn{ \beta
}_\ell$ is a vector of length $d_\ell$ summarizing the effects of the
covariates in model $M_\ell$ on the response $\dn{ Y }$ and $\sigma_\ell^2$
is the error variance for model $M_\ell$. Variable selection based on EPP
was originally presented by \cite{perez_98}; additional computational
details have recently appeared in \cite{fouskakis_ntzoufras_2012}.

Suppose we have an imaginary/training data set $\dn{ y }^*$, of size $n^*$,
and design matrix $\mt{ X }^*$ of size $n^* \times ( p + 1 ) \,$, where $p$
denotes the total number of available covariates. Then the EPP distribution,
given by (\ref{epp}), will depend on $\mt{ X }^*$ but not on $\dn{ y }^*$,
since the latter is integrated out. The selection of a \textit{minimal
training sample} has been proposed, to make the information content of the
prior as small as possible, and this is an appealing idea.  However, even
the definition of \textit{minimal} turns out to be open to question, since
it is problem-specific (which models are we comparing?) and data-specific
(how many variables are we considering?). One possibility is to specify the
size of the minimal training sample either from (a) the dimension of the
full model or (b) the dimension of the larger model in every pairwise model
comparison performed. But, as will be seen below, when $n$ is not much
larger than $p$, working with a minimal training sample can result in a
prior that is far more influential than intended. Additionally, if the data
derive from a highly structured situation, such as a randomized complete
block experiment, most choices of a small part of the data to act as a
training sample would be untypical.

Even if the minimal-training-sample idea is accepted, the problem of
choosing such a subset of the full data set still remains. A natural
solution involves computing the arithmetic mean (or some other summary of
distributional center) of the Bayes factors over all possible training
samples, but this approach can be computationally infeasible, especially
when $n$ is much larger than $p$; for example, with $( n, p ) = ( 100, 50 )$
and $( 500, 100 )$ there are about $10^{ 29 }$ and $10^{ 107 }$ possible
training samples, respectively, over which to average. An obvious choice at
this point is to take a random sample from the set of all possible minimal
training samples, but this adds an extraneous layer of Monte-Carlo noise to
the model-comparison process. These difficulties have been well-documented
in the literature, but the quest for a fully satisfactory solution is still
on-going; for example, \cite{berger_pericchi_2004} note that they ``were
unable to define any type of `optimal' training sample."

An approach to choosing covariate values for the training sample has been
proposed by researchers working with intrinsic priors
(\cite{casella_moreno_2006}; \cite{giron_etal_2006a};
\cite{moreno_giron_2008}; \cite{casella_etal_2009}), since the same problem
arises there too. They consider all pairwise model comparisons, either
between the full model and each nested model, or between every model
configuration and the null model, or between two nested models. They used
training samples of size defined by the dimension of the full model in the
first case, or by the dimension of the larger model in every pairwise
comparison in the second and third cases. In all three settings, they
proved that the intrinsic prior of the parameters of the larger model in
each pairwise comparison, denoted here by $M_k$, depends on the imaginary
covariate values only through the expression $\mt{ W }_k^{ -1 } = ( \mt{
X_k }^{ *^T } \mt{ X_k }^* )^{ -1 }$, where $\mt{ X }_k^*$ is the imaginary
design matrix of dimension $( d_k + 1 ) \times d_k \,$ for a minimal
training sample of size $( d_k + 1 )$. Then, driven by the idea of the
arithmetic intrinsic Bayes factor, they avoid the dependence on the
training sample by replacing $\mt{ W }_k^{ -1 }$ with its average over all
possible training samples of minimal size. This average can be proved to be
equal to $\frac{ n }{ d_k + 1 } \left( \mt{ X }_k^T \mt{ X }_k \right)^{ -1
}$, where $\mt{ X }_k$ is the design matrix of the larger model in each
pairwise comparison, and therefore no subsampling from the $\mt{ X }_k$
matrix is needed.

Although this approach seems intuitively sensible and dispenses with the
extraction of the submatrices from $\mt{ X }_k$, it is unclear if the
procedure retains its intrinsic interpretation, i.e., whether it is
equivalent to the arithmetic intrinsic Bayes factor. Furthermore, and more
seriously, the resulting prior can be influential when $n$ is not much
larger than $p$, in contrast to the prior we propose here, which has a
unit-information interpretation.

\section{Power-expected-posterior (PEP) priors}

\label{sec_pep}

In this paper, starting with the EPP methodology, we combine ideas from the
power-prior approach of \cite{ibrahim_chen_2000} and the
unit-information-prior approach of \cite{kass_wasserman_95}. As a first
step, the likelihoods involved in the EPP distribution are raised to the
power $\frac{ 1 }{ \delta }$ and density-normalized. Then we set the power
parameter $\delta$ equal to $n^*$, to represent information equal to one
data point; in this way the prior corresponds to a sample of size one with
the same sufficient statistics as the observed data. Regarding the size of
the training sample, $n^*$, this could be any integer from $( p + 2 )$ (the
minimal training sample size) to $n$. As will become clear below, we have
found that significant advantages (and no disadvantages) arise from the
choice $n^* = n$, from which $\mt{ X }^* = \mt{ X }$. In this way we
completely avoid the selection of a training sample and its effects on the
posterior model comparison, while still holding the prior information
content at one data point. Sensitivity analysis for different choices of
$n^*$ is performed as part of the first set of experimental results below
(see Section \ref{sim}).

For any $M_\ell \in { \cal M }$, we denote by $\pi_\ell^N ( \dn{ \beta
}_\ell, \sigma_\ell^2 | \bx_\ell^* )$ the baseline prior for model
parameters $\dn{ \beta }_\ell$ and $\sigma_\ell^2$. Then the
\textit{power-expected-posterior} (PEP) prior $\pi_\ell^{ PEP } ( \dn{ \beta
}_\ell, \sigma_\ell^2 | \, \bx_\ell^* \, , \delta )$ takes the following
form:
\begin{equation} \label{pep1} 
\pi_\ell^{ PEP } ( \dn{ \beta }_\ell, \sigma_\ell^2 \, | \, \bx_\ell^* \, ,
\delta ) =  \int \pi_\ell^N ( \dn{ \beta }_\ell, \sigma_\ell^2 \, | \dn{ y
}^*, \delta ) \, m_0^N ( \dn{ y }^* | \, \bx_0^* \, , \delta ) \, d \dn{ y
}^* \, ,
\end{equation}
where 
\begin{equation} \label{prior-post} 
\pi_\ell^N ( \dn{ \beta }_\ell, \sigma_\ell^2 \, | \dn{ y }^*, \delta ) =
\frac{ f ( \dn{ y }^* | \, \dn{ \beta }_\ell \, , \sigma_\ell^2, M_\ell \,
; \mt{ X }_\ell^* \, , \delta ) \pi_\ell^N ( \dn{ \beta }_\ell,
\sigma_\ell^2 | \bx_\ell^* ) }{ m_\ell^N ( \dn{ y }^* | \, \bx_\ell^*\, ,
\delta ) },
\end{equation}
and $f ( \dn{ y }^* | \, \dn{ \beta }_\ell \, , \sigma_\ell^2, M_\ell \, ;
\mt{ X }_\ell^* \, , \delta ) \propto f ( \dn{ y }^* | \dn{ \beta }_\ell \,
, \sigma_\ell^2, M_\ell \, ; \mt{ X }_\ell^* )^{ \frac{ 1 }{ \delta } }$ is
the EPP likelihood raised to the power $\frac{ 1 }{ \delta }$ and
density-normalized, i.e.,
\begin{eqnarray} \label{power_likelihood}
f ( \dn{ y }^* | \, \dn{ \beta }_\ell \, , \sigma_\ell^2, M_\ell \, ; \mt{
X }_\ell^* \, , \delta ) & = & \frac{ f ( \dn{ y }^* | \dn{ \beta }_\ell,
\sigma_\ell^2, M_\ell \, ; \mt{ X }_\ell^* )^{ \frac{ 1 }{ \delta } } }{
\int f ( \dn{ y }^* | \dn{ \beta }_\ell, \sigma_\ell^2, M_\ell \, ; \mt{ X
}_\ell^* )^{ \frac{ 1 }{ \delta } } d \dn{ y }^*} = \frac{ f_{ N_{ n^* } }
( \dn{ y }^* \, ; \, \mt{ X }_\ell^* \dn{ \beta }_\ell \, , \sigma_\ell^2
\, \mt{ I }_{ n^* } )^{ \frac{ 1 }{ \delta } } }{ \int f_{ N_{ n^* } } (
\dn{ y }^* \, ; \, \mt{ X }_\ell^* \dn{ \beta }_\ell \, , \sigma_\ell^2 \,
\mt{ I }_{ n^* } )^{ \frac{ 1 }{ \delta } } d \dn{ y }^* } \nonumber \\
& = & f_{ N_{ n^* } } ( \dn{ y }^* \, ; \, \mt{ X }_\ell^* \dn{ \beta
}_\ell \, , \delta \, \sigma_\ell^2 \mt{ I }_{ n^* } ) \, ;
\end{eqnarray}
here $f_{ N_d } ( \dn{ y } \, ; \, \dn{ \mu }, \dn{ \Sigma } )$ is the
density of the $d$-dimensional Normal distribution with mean $\dn{ \mu }$
and covariance matrix $\dn{ \Sigma }$, evaluated at $\dn{ y }$.

The distribution $m_\ell^N ( \dn{ y }^* | \, \bx_\ell^* \, , \delta )$
appearing in (\ref{pep1}) (for $\ell = 0$) and (\ref{prior-post}) is the
prior predictive distribution (or the marginal likelihood), evaluated at
$\dn{ y }^*$, of model $M_\ell$ with the power likelihood defined in
(\ref{power_likelihood}) under the baseline prior $\pi^N_\ell ( \dn{ \beta
}_\ell, \sigma_\ell^2 \, | \, \bx_\ell^* )$, i.e.,
\begin{equation} \label{new2-3}
m_\ell^N ( \dn{ y }^* | \, \bx_\ell^* \, , \delta ) = \int \! \! \int f_{
N_{ n^* } } ( \dn{ y }^* \, ; \, \mt{ X }_\ell^* \, \dn{ \beta }_\ell \, ,
\delta \, \sigma_\ell^2 \, \mt{ I }_{ n^* } ) \, \pi^N_\ell ( \dn{ \beta
}_\ell \, , \sigma_\ell^2 \, | \, \bx_\ell^* ) \, d \dn{ \beta }_\ell \, d
\sigma_\ell^2 \, .
\end{equation}
From (\ref{pep1}) and (\ref{prior-post}), the PEP prior can be re-written
as 
\begin{equation} \label{pep}
\pi_\ell^{ PEP } ( \dn{ \beta }_\ell, \sigma_\ell^2 \, | \, \bx_\ell^* \, ,
\delta ) = \pi_\ell^N ( \dn{ \beta }_\ell, \sigma_\ell^2 | \bx_\ell^* )
\int \frac{ m_0^N ( \dn{ y }^* | \, \bx_0^* \, , \delta ) }{ m_\ell^N (
\dn{ y }^* | \, \bx_\ell^* \, , \delta ) } \, f ( \dn{ y }^* | \, \dn{
\beta }_\ell \, , \sigma_\ell^2, M_\ell \, ; \mt{ X }_\ell^* \, , \delta )
\, d \dn{ y }^* \, .
\end{equation}
Under the PEP prior distribution (\ref{pep}), the posterior distribution of
the model parameters $( \dn{ \beta }_\ell \, , \sigma_\ell^2) $ is
\begin{eqnarray} \label{pep_post_gen}
\pi_\ell^{ PEP } ( \dn{ \beta }_\ell, \sigma_\ell^2 | \dn{ y } ; \mt{ X
}_\ell, \mt{ X }_\ell^*, \delta ) & \propto & \int \pi_\ell^{ N } ( \dn{
\beta }_\ell, \sigma_\ell^2 | \dn{ y }, \dn{ y }^* ; \mt{ X }_\ell, \mt{ X
}_\ell^*, \delta ) \times \nonumber \\
& & \hspace*{0.25in} \, m_\ell^N ( \dn{ y } | \dn{ y }^* ; \bx_\ell,
\bx_\ell^* \, , \delta ) \, m_0^N ( \dn{ y }^* | \bx_0^* \, , \delta ) \, d
\dn{ y }^*,
\end{eqnarray}
where $\pi_\ell^{ N } ( \dn{ \beta }_\ell, \sigma_\ell^2 | \dn{ y }, \dn{ y
}^* ; \mt{ X }_\ell, \mt{ X }_\ell^*, \delta )$ and $m_\ell^N ( \dn{ y } |
\dn{ y }^* ; \bx_\ell, \bx_\ell^* \, , \delta )$ are the posterior
distribution of $( \dn{ \beta }_\ell, \sigma_\ell^2 )$ and the marginal
likelihood of model $M_\ell$, respectively, using data $\dn{ y }$ and design
matrix $\mt{ X }_\ell$ under prior $\pi_\ell^{ N } ( \dn{ \beta }_\ell \, ,
\sigma_\ell^2 | \dn{ y }^*; \, \bx_\ell^* \, , \delta )$ --- i.e., the
posterior of $( \dn{ \beta}_\ell \, , \sigma_\ell^2 )$ with power Normal
likelihood (\ref{power_likelihood}) and baseline prior $\pi_\ell^N ( \dn{
\beta }_\ell, \sigma_\ell^2 | \bx_\ell^* )$.

In what follows we present results for the PEP prior using two specific
baseline prior choices: the independence Jeffreys prior (improper) and the
$g$-prior (proper). The first is the usual choice among researchers
developing objective variable-selection methods, but the posterior results
using this first baseline-prior choice can also be obtained as a limiting
case of the results using the second baseline prior (see Section
\ref{connection_j_z}); usage of this second approach can lead to
significant computational acceleration with the Jeffreys baseline prior.

\subsection{PEP-prior methodology with the Jeffreys baseline prior: J-PEP}

\label{pep-jeffreys}

Here we use the independence Jeffreys prior (or reference prior) as the
baseline prior distribution. Hence for $M_\ell \in {\cal M}$ we have
\begin{equation} \label{JE}
\pi_\ell^N ( \dn{ \beta }_\ell \, , \sigma^2 \, | \, \bx_\ell^* ) = \frac{
c_\ell }{ \sigma_\ell^2 } \, ,
\end{equation}
where $c_\ell$ is an unknown normalizing constant; we refer to the resulting
PEP prior as \textit{J-PEP}.

\subsubsection{Prior setup}

\label{pep-jeffreys-prior-setup}

Following (\ref{pep}) for the baseline prior (\ref{JE}) and the power
likelihood specified in (\ref{power_likelihood}), the PEP prior, for any
model $M_\ell \,$, now becomes
\begin{eqnarray} \label{pep_j}
\pi_\ell^{ \textit{J-PEP} } ( \dn{ \beta }_\ell, \sigma_\ell^2 | \bx_\ell^*
\, , \delta) & = & \int f_{ N_{ d_\ell } } \! \left[ \dn{ \beta }_\ell \, ;
\, \widehat{ \dn{ \beta } }_\ell^*,  \delta \, ( \mt{ X }_\ell^{ *^T } \mt{
X }_\ell^* )^{ -1 } \sigma_\ell^2 \right]  \times \nonumber \\ & &
\hspace*{0.25in} f_{ IG } \Big( \sigma_\ell^2 \, ; \, \frac{ n^* - d_\ell }{
2 }, \frac{ RSS_\ell^* }{ 2 \delta } \Big) \, m_0^N ( \dn{ y }^* | \bx_0^*
\, , \delta ) \, d \dn{ y }^* \, ,
\end{eqnarray}
where $f_{ IG } \left( y \, ; \, a, b \right)$ is the density of the
Inverse-Gamma distribution with parameters $a$ and $b$ and mean $\frac{ b }{
a - 1 }$, evaluated at $y$. Here $\widehat{ \dn{ \beta } }_\ell^* = ( \mt{ X
}_\ell^{ *^T } \mt{ X }_\ell^* )^{ -1 } \mt{ X }_\ell^{ *^T } \dn{ y }^*$ is
the MLE with outcome vector $\dn{ y }^*$ and design matrix $\mt{ X
}_\ell^*$, and $RSS_\ell^* = \dn{ y }{ ^*{ ^T } } \big[ \identst - \mt{ X
}_\ell^* ( \mt{ X }_\ell{ ^*{ ^T } } \mt{ X }_\ell^* )^{ -1 } \mt{ X }_\ell{
^*{ ^T } } \big] \dn{ y }^* $ is the residual sum of squares using $( \dn{ y
}^*, \mt{ X }_\ell^*)$ as data. The prior predictive distribution of any
model $M_\ell$ with power likelihood defined in (\ref{power_likelihood})
under the baseline prior (\ref{JE}) is given by
\begin{equation} \label{prior_predictive_j}
m_\ell^N ( \dn{ y^* } \, | \, \bx_\ell^* \, , \delta ) =  c_\ell \, \pi^{
\frac{ 1 }{ 2 } ( d_\ell - n^* ) } \, | \mt{ X }_\ell^{ *^T } \mt{ X
}_\ell^* |^{ - \frac{ 1 }{ 2 } } \, \Gamma \left( \frac{ n^* - d_\ell }{ 2 }
\right) RSS_\ell^{ *^{ - \left( \tfrac{ n^* - d_\ell }{ 2 } \right) } } \, .
\end{equation}

\subsubsection{Posterior distribution}

\label{pep-jeffreys-posterior}

For the PEP prior (\ref{pep_j}), the posterior distribution of the model
parameters $( \dn{ \beta }_\ell \, , \sigma_\ell^2 )$ is given by
(\ref{pep_post_gen}) with $f ( \dn{ \beta }_\ell, \sigma_\ell^2 | \dn{ y },
\dn{ y }^*, M_\ell \, ; \mt{ X }_\ell, \mt{ X }_\ell^*, \delta )$ and
$m_\ell^N ( \dn{ y } | \dn{ y }^* ; \bx_\ell, \bx_\ell^* \, , \delta )$ as
the posterior distribution of $( \dn{ \beta }_\ell, \sigma_\ell^2 )$ and
the marginal likelihood of model $M_\ell$, respectively, using data $\dn{ y
}$, design matrix $\mt{ X }_\ell$, and the Normal-Inverse-Gamma
distribution appearing in (\ref{pep_j}) as prior.  Hence
\begin{eqnarray} \label{new2-8}
\pi_\ell^N ( \dn{ \beta }_\ell | \sigma_\ell^2 , \dn{ y }, \dn{ y }^*;
\mt{ X }_\ell, \mt{ X }_\ell^*, \delta ) & = & f_{ N_{ d_\ell } } \big(
\dn{ \beta }_\ell \, ; \, \widetilde{ \dn{ \beta } }^N, \; \widetilde{
\mt{ \Sigma } }^N \sigma_\ell^2 \big) \ \mbox{and} \nonumber \\
\pi_\ell^N ( \sigma_\ell^2 | \dn{ y }, \dn{ y }^*; \mt{ X }_\ell, \mt{
X}_\ell^*, \delta ) & = & f_{ IG } ( \sigma_\ell^2 \, ; \, \widetilde{ a
}_\ell^N, \widetilde{ b }_\ell^N ) \, ,
\end{eqnarray}
with
\begin{eqnarray} \label{post_params_j}
\widetilde{ \dn{ \beta } }^N & = & \widetilde{ \mt{ \Sigma } }^N ( \mt{ X
}_\ell^T \dn{ y } + \delta^{ -1 } \mt{ X }_\ell^{ *^T } \dn{ y }^* ), \
\widetilde{ \mt{ \Sigma } }^N = \left[ \mt{ X }_\ell^T \mt{ X }_\ell + \,
\delta^{ -1 } \mt{ X }_\ell^{ *^T } \mt{ X }_\ell^* \right]^{ -1 } \, \
\mbox{and} \nonumber \\
\widetilde{ a }_\ell^N & = & \frac{ n + n^* -d_\ell }{ 2 } \, , \
\widetilde{ b }_\ell^N = \frac{ SS_\ell^N + \delta^{ -1 } RSS_\ell^* }{ 2 }
\, .
\end{eqnarray}
Here
\begin{eqnarray} \label{ss-ell-N-1}
SS_\ell^N & = & \big( \dn{ y } - \mt{ X }_\ell \, \widehat{ \dn{ \beta }
}_\ell^* )^T \left[ \ident + \delta \, \mt{ X }_\ell ( \mt{ X }_\ell^{ *^T
} \mt{ X }_\ell^* )^{ -1 } \mt{ X }_\ell^T \right]^{ -1 } \big( \dn{ y } -
\mt{ X }_\ell \, \widehat{ \dn{ \beta } }_\ell^* )
\end{eqnarray}
and
\begin{equation} \label{posterior_predictive_j}
m_\ell^N ( \dn{ y } | \dn{ y }^* ; \bx_\ell, \bx_\ell^* \, , \delta ) = f_{
St_n } \left\{ \dn{ y } \, ; \, n^* - d_\ell, \, \mt{ X }_\ell \widehat{
\dn{ \beta } }_\ell^*, \, \frac{ RSS^*_\ell }{ \delta( n^* - d_\ell ) }
\left[ \mt{ I }_{ n } + \delta \, \mt{ X }_\ell ( \mt{ X }_\ell^{ *^T }
\mt{ X }_\ell^* )^{ -1 } \mt{ X }_\ell^T \right] \right\} \, ,
\end{equation}
in which $St_n ( \cdot \, ; d, \dn{ \mu }, \Sigma )$ is the multivariate
Student distribution in $n$ dimensions with $d$ degrees of freedom,
location $\dn{ \mu }$ and scale $\Sigma$. Thus the posterior distribution
of the model parameters $( \dn{ \beta }_\ell \, ,\sigma_\ell^2 )$ under the
PEP prior (\ref{pep_j}) is
\begin{eqnarray} \label{posterior_pep_nig}
\pi_\ell^{ \textit{J-PEP} } ( \dn{ \beta }_\ell, \sigma_\ell^2 | \dn{ y } ;
\mt{ X }_\ell, \mt{ X }_\ell^*, \delta ) & \propto & \int f_{ N_{ d_\ell } }
\big( \dn{ \beta }_\ell \, ; \, \widetilde{ \dn{ \beta } }^N, \; \widetilde{
\mt{ \Sigma } }^N \sigma_\ell^2 \big) \, f_{ IG } ( \sigma_\ell^2 \, ; \,
\widetilde{ a }_\ell^N, \widetilde{ b }_\ell^N ) \times \nonumber \\
& & \hspace*{0.25in} m_\ell^N ( \dn{ y } | \dn{ y }^* ;
\, \bx_\ell, \bx_\ell^* \, , \delta ) \, m_0^N ( \dn{ y }^* | \bx_0^* \, ,
\delta ) \, d \dn{ y }^* \, ,
\end{eqnarray}
with $m_0^N ( \dn{ y }^* | \bx_0^* \, , \delta )$ given in
(\ref{prior_predictive_j}). A detailed MCMC scheme for sampling from this
distribution is presented in Section 2 of the web Appendix.

\subsection{PEP-prior methodology with the $g$-prior as baseline: Z-PEP}

\label{pep-zellner}

Here we use the Zellner $g$-prior as the baseline prior distribution; in
other words, for any $M_\ell \in { \cal M }$
\begin{equation} \label{zellners-n-star-prior}
\pi_\ell^N ( \dn{ \beta }_\ell | \sigma_\ell^2 \, ; \, \bx_\ell^* ) = f_{
N_{ d_\ell } } \left[ \dn{ \beta }_\ell \, ; \, \dn{ 0 }, g \, ( \bx_\ell^{
*^T } \bx_\ell^* )^{ -1 } \sigma_\ell^2 \right] \textrm{ and } \pi_\ell^N (
\sigma_\ell^2 ) = f_{ IG } \left( \sigma_\ell^2 \, ; \, a_\ell, b_\ell
\right) \, .
\end{equation}
We refer to the resulting PEP prior as \textit{Z-PEP}. Note that the usual
improper reference prior for $\sigma_\ell$ could easily be used instead,
but for computational reasons we prefer here to use the Inverse-Gamma prior
(recall that for $a_\ell$ and $b_\ell$ approximately equal to zero, the
Inverse-Gamma prior degenerates to the improper reference prior).

\subsubsection{Prior setup}

\label{pep-zellner-prior}

For any model $M_\ell$, under the baseline prior setup
(\ref{zellners-n-star-prior}) and the power likelihood
(\ref{power_likelihood}), the prior predictive distribution is
\begin{equation} \label{prior_predictive_z}
m_\ell^N ( \dn{ y^* } \, | \, \bx_\ell^* \, , \delta ) = f_{ St_{ n^* } }
\Big( \dn{ y }^* \, ; 2 \, a_\ell, \dn{ 0 }, \frac{ b_\ell }{ a_\ell } {
\Lambda_\ell^* }^{ -1 } \Big) \, ,
\end{equation}
where
\begin{equation} \label{lambda}
{ \Lambda_\ell^* }^{ -1 } = \delta \left[ \mt{ I }_{ n^* } - \frac{ g }{ g
+ \delta } \mt{ X }_\ell^* \left( { \mt{ X }_\ell^* }^T \mt{ X }_\ell^*
\right)^{ -1 }{ \mt{ X }_\ell^* }^T \right]^{ -1 } = \delta \, \mt{ I }_{
n^* } + g \, \mt{ X }_\ell^* \left( { \mt{ X }_\ell^* }^T \mt{ X }_\ell^*
\right)^{ -1 }{ \mt{ X }_\ell^* }^T \, .
\end{equation}
In the special case of the constant model, (\ref{lambda}) simplifies to
$\left( \delta \, \mt{ I }_{ n^* } + \frac{ g }{ n } \dn{ 1 }_{ n^* } \dn{
1 }_{ n^* }^T \right)$, where $\dn{ 1 }_{ n^* }$ is a vector of length
$n^*$ with all elements equal to one.

Following (\ref{pep}) for the baseline prior (\ref{zellners-n-star-prior})
and the power likelihood specified in (\ref{power_likelihood}), the Z-PEP
prior, for any model $M_\ell \,$, now becomes
\begin{eqnarray} \label{pep_z}
\pi_\ell^{ \textit{Z-PEP} } ( \dn{ \beta }_\ell, \sigma_\ell^2 | \bx_\ell^*
\, , \delta ) & = & \int f_{ N_{ d_\ell } } \left[ \dn{ \beta }_\ell \, ; \,
w \, \widehat{ \dn{ \beta } }_\ell^*, w \, \delta \, ( \mt{ X }_\ell^{ *^T }
\mt{ X }_\ell^* )^{ -1 } \sigma_\ell^2 \right] \times \nonumber \\
& & \hspace*{0.25in} f_{ IG } \Big( \sigma_\ell^2 \, ; \, a_\ell + \frac{
n^* }{ 2 }, b_\ell + \frac{ SS_\ell^* }{ 2 } \Big) \, m_0^N ( \dn{ y }^* |
\bx_0^* \, , \delta ) \, d \dn{ y }^* \, .
\end{eqnarray}
Here $w = \frac{ g }{ g + \delta }$ is the shrinkage weight, $\widehat{ \dn{
\beta } }_\ell^* = ( \mt{ X }_\ell^{ *^T } \mt{ X }_\ell^*)^{ -1 } \mt{ X
}_\ell^{ *^T } \dn{ y }^*$ is the MLE with outcome vector $\dn{ y}^*$ and
design matrix $\mt{ X }_\ell^*$, and $SS_\ell^* = \dn{ y }^{ *^T }
\Lambda_\ell^* \, \dn{ y }^*$ is the posterior sum of squares.

The prior mean vector and covariance matrix of $\dn{ \beta }_\ell$, and the
prior mean and variance of $\sigma_\ell^2$, can be calculated analytically
from these expressions; details are available in Theorems 1 and 2 in Section
1 of the web Appendix.

\subsubsection{Posterior distribution}

\label{sec_posterior_pep}

The distributions $\pi_\ell^{ N } ( \dn{ \beta }_\ell, \sigma_\ell^2 | \dn{
y }, \dn{ y }^* ; \mt{ X }_\ell, \mt{ X }_\ell^*, \delta )$ and $m_\ell^N (
\dn{ y } | \dn{ y }^* ; \bx_\ell, \bx_\ell^* \, , \delta )$ involved in the
calculation of the posterior distribution (\ref{pep_post_gen}) are now the
posterior distribution of $( \dn{ \beta }_\ell, \sigma_\ell^2 )$ and the
marginal likelihood of model $M_\ell$, respectively, using data $\dn{ y }$,
design matrix $\mt{ X }_\ell$, and $\pi_\ell^{ N } ( \dn{ \beta }_\ell \, ,
\sigma_\ell^2 | \dn{ y }^*; \, \bx_\ell^* \, , \delta )$ as a prior density
(which is the Normal-Inverse-Gamma distribution appearing in
(\ref{pep_z})).  Therefore the posterior distribution of the model
parameters $( \dn{ \beta }_\ell \, ,\sigma_\ell^2 )$ under the Z-PEP prior
(\ref{pep_z}) is given by
\begin{eqnarray} \label{posterior_pep_z}
\pi_\ell^{ \textit{Z-PEP} } ( \dn{ \beta }_\ell, \sigma_\ell^2 | \dn{ y }
; \mt{ X }_\ell, \mt{ X }_\ell^*, \delta ) & \propto & \int f_{ N_{ d_\ell }
} \big( \dn{ \beta }_\ell \, ; \, \widetilde{ \dn{ \beta } }^N, \;
\widetilde{ \mt{ \Sigma } }^N \sigma_\ell^2 \big) \, f_{ IG } (
\sigma_\ell^2 \, ; \, \widetilde{ a }_\ell^N, \widetilde{ b }_\ell^N )
\times \nonumber \\
& & \hspace*{0.25in} m_\ell^N ( \dn{ y } | \dn{ y }^* ;
\,\bx_\ell, \bx_\ell^* \, , \delta ) \, m_0^N ( \dn{ y }^* | \bx_0^* \, ,
\delta ) \, d \dn{ y }^* \, ,
\end{eqnarray}
with
\begin{eqnarray} \label{post_params1_z}
\widetilde{ \dn{ \beta } }^N & = & \widetilde{ \mt{ \Sigma } }^N ( \mt{ X
}_\ell^T \dn{ y } + \delta^{ -1 } \mt{ X }_\ell^{ *^T } \dn{ y }^* ), \
\widetilde{ \mt{ \Sigma } }^N = \left[ \mt{ X }_\ell^T \mt{ X }_\ell + ( w
\, \delta )^{ -1 } \mt{ X }_\ell^{ *^T } \mt{ X }_\ell^* \right]^{ -1 } \,
\ \textrm{and} \nonumber \\
\widetilde{ a }_\ell^N & = & \frac{ n + n^* }{ 2 } + a_\ell \, , \
\widetilde{ b }_\ell^N = \frac{ SS_\ell^N + SS_\ell^* }{ 2 } + b_\ell \, .
\end{eqnarray}
Here
\begin{equation} \label{ss-ell-n-1}
SS_\ell^N = \big( \dn{ y } - w \, \mt{ X }_\ell \, \widehat{ \dn{ \beta }
}_\ell^* )^T \left[ \ident + \delta \, w \, \mt{ X }_\ell ( \mt{ X }_\ell^{
*^T } \mt{ X }_\ell^* )^{ -1 } \mt{ X }_\ell^T \right]^{ -1 } \big( \dn{ y
} - w \, \mt{ X }_\ell \, \widehat{ \dn{ \beta } }_\ell^* ) \, ,
\end{equation}
while
\begin{equation} \label{posterior_predictive_z}
m_\ell^N ( \dn{ y } | \dn{ y }^* ; \bx_\ell, \bx_\ell^* \, , \delta ) =
f_{ St_n } \left\{ \dn{ y } \, ; \, 2 \, a_\ell + n^*, w \, \mt{ X }_\ell
\widehat{ \dn{ \beta } }_\ell^*, \, \frac{ 2 b_\ell + SS^*_\ell }{ 2 \,
a_\ell + n^* } \left[ \mt{ I }_{ n } + w \, \delta \, \mt{ X }_\ell ( \mt{
X }_\ell^{ *^T } \mt{ X }_\ell^* )^{ -1 } \mt{ X }_\ell^T \right] \right\}
\, ,
\end{equation}
and $m_0^N ( \dn{ y }^* | \bx_0^* \, , \delta )$ is given in
(\ref{prior_predictive_z}). A detailed MCMC scheme for sampling from this
distribution is presented in Section 2 of the web Appendix.

\subsubsection{Specification of hyper-parameters}

\label{sec_prior_parameters}

The marginal likelihood for the Z-PEP prior methodology, using the
$g$-prior as a baseline, depends on the selection of the hyper-parameters
$g$, $a_\ell$ and $b_\ell$. We make the following proposals for specifying
these quantities, in settings in which strong prior information about the
parameter vectors in the models is not available.

The parameter $g$ in the Normal baseline prior is set to $\delta \, n^*$,
so that with $\delta = n^*$ we use $g = ( n^* )^2$. This choice will make
the $g$-prior contribute information equal to one data point within the
posterior $\pi_\ell^{ N } ( \dn{ \beta }_\ell \, , \sigma_\ell^2 | \dn{ y
}^*; \, \bx_\ell^* \, , \delta )$. In this manner, the entire Z-PEP prior
contributes information equal to $\left( 1 + \frac{ 1 }{ \delta } \right)$
data points.

We set the parameters $a_\ell$ and $b_\ell$ in the Inverse-Gamma baseline
prior to 0.01, yielding a baseline prior mean of 1 and variance of 100
(i.e., a large amount of prior uncertainty) for the precision parameter;
our method yields similar results across a broad range of small values of
$a_\ell$ and $b_\ell$. (If strong prior information about the model
parameters is available, Theorems 1 and 2 in Section 1 of the web Appendix
can be used to guide the choice of $a_\ell$ and $b_\ell$.) 

\subsection{Connection between the J-PEP and Z-PEP distributions}

\label{connection_j_z}

By comparing the posterior distributions under the two different baseline
schemes described in Sections \ref{pep-jeffreys} and \ref{pep-zellner}, it
is straightforward to prove that they coincide under the following
conditions $( * )$: large $g$ (and therefore $w \approx 1$), $a_\ell = -
\frac{ d_\ell }{ 2 }$ and $b_\ell = 0$.

To be more specific, the posterior distribution in both cases takes the form
of equation (\ref{posterior_pep_nig}). The parameters of the
Normal-Inverse-Gamma distribution (see equations (\ref{post_params1_z}))
involved in the posterior distribution using the $g$-prior as baseline
become equal to the corresponding parameters for the Jeffreys baseline (see
equations (\ref{post_params_j})) with parameter values $( * )$. Similarly,
the conditional marginal likelihood $m_\ell^N ( \dn{ y } | \dn{ y }^* ;
\,\bx_\ell, \bx_\ell^* \, , \delta )$ under the two baseline priors (see
equations (\ref{posterior_predictive_j}) and (\ref{posterior_predictive_z}))
becomes the same under conditions $( * )$.

Finally, the prior predictive densities $m_0^N ( \dn{ y }^* | \bx_0^* \, ,
\delta )$ involved in equations (\ref{posterior_pep_nig}) and
(\ref{posterior_pep_z}) can be written as $m_0^N ( \dn{ y }^* | \bx_0^* \,
, \delta ) \propto (2 \, b_\ell \, + \, SS_\ell^* )^{ - \frac{ n^* + a_\ell
}{ 2 } }$ for the $g$-prior baseline and as $m_0^N ( \dn{ y }^* | \bx_0^*
\, , \delta ) \propto RSS_\ell^{ *^{ - \frac{ n^* - d_\ell }{ 2 } } }$ for
the Jeffreys baseline. For large values of $g$, $SS_\ell^* \rightarrow
\delta^{ -1 } RSS_\ell^*$, and the two un-normalized prior predictive
densities clearly become equal if we further set $a_\ell = - \frac{ d_\ell
}{ 2 }$ and $b_\ell = 0$. Any differences in the normalizing constants of
$m_0^N ( \dn{ y }^* | \bx_0^* \, , \delta )$ cancel out when normalizing
the posterior distributions (\ref{posterior_pep_nig}) and
(\ref{posterior_pep_z}).

For these reasons, the posterior results using the Jeffreys prior as
baseline can be obtained as a special (limiting) case of the results using
the $g$-prior as baseline. This can be beneficial for the computation of
the posterior distribution, which is detailed in Section 2 of the web
Appendix, and for the estimation of the marginal likelihood presented in
Section \ref{marginal_likelihood_pep_gen}.

\section{Marginal-likelihood computation}

\label{marginal_likelihood_pep_gen}

Under the PEP-prior approach, it is straightforward to show that the
marginal likelihood of any model $M_\ell \in { \cal M }$ is
\begin{equation} \label{new2-17}
m^{ PEP }_\ell ( \dn{ y } | \mt{ X }_\ell \, , \mt{ X }_\ell^* \,
, \delta ) = m_\ell^N ( \dn{ y } | \bx_\ell \, , \bx_\ell^* \,) \int \frac{
m_\ell^N ( \dn{ y }^* | \dn{ y } , \bx_\ell, \bx_\ell^* \, , \delta ) }{
m_\ell^N ( \dn{ y }^* | \bx_\ell^* \, , \delta ) } \, m_0^N ( \dn{ y }^* |
\bx_0^* \, , \delta ) \, d \dn{ y }^* \, .
\end{equation}
Note that in the above expression $ m_\ell^N ( \dn{ y } | \bx_\ell \, ,
\mt{ X }_\ell^* )$ is the marginal likelihood of model $M_\ell$ for the
actual data under the baseline prior and therefore, under the baseline
$g$-prior (\ref{zellners-n-star-prior}), is given by
\begin{equation} \label{marginal-like-zellners}
m_\ell^N ( \dn{ y } | \bx_\ell \, , \bx_\ell^* ) = f_{ St_{ n } } \left\{
\dn{ y } \, ; 2 \, a_\ell, \dn{ 0 }, \frac{ b_\ell }{ a_\ell } \left[ \mt{
I }_{ n } + g \, \mt{ X }_\ell \left( \bx_\ell^{ *^T } \bx_\ell^* \right)^{
-1 }{ \mt{ X }_\ell }^T \right] \right\} \, ;
\end{equation}
under the Jeffreys baseline prior (\ref{JE}), $m_\ell^N ( \dn{ y } |
\bx_\ell \, , \bx_\ell^* )$ is given by equation (\ref{prior_predictive_j})
with data $(\dn{y}, \mt{ X }_\ell )$.

In settings in which the marginal likelihood (\ref{new2-17}) is not
analytically tractable, we have obtained four possible Monte-Carlo
estimates. In Section \ref{pi-g-prior-1} we show that two of these
possibilities are far less accurate than the other two; we detail the less
successful approaches in Section 3 of the web Appendix. The other two
(more accurate) methods are as follows:

\begin{enumerate}

\item[(1)]

Generate $\dn{ y }^{ * ( t ) } \, ( t = 1, \dots, T )$ from $m_\ell^N (
\dn{ y }^* | \dn{ y }, \, \bx_\ell, \bx_\ell^* \, , \delta )$ and estimate
the marginal likelihood by
\begin{equation} \label{marginal_like_estim3}
\hat{ m }^{ PEP }_\ell ( \dn{ y } | \mt{ X }_\ell \, , \mt{ X }_\ell^*
\, , \delta ) = m_\ell^N ( \dn{ y } | \bx_\ell, \bx_\ell^* ) \left[ \frac{
1 }{ T } \sum_{ t = 1}^T \frac{ m_0^N ( \dn{ y }^{ * ( t ) } | \bx_0^* \, ,
\delta ) }{ m_\ell^N ( \dn{ y }^{ * ( t ) } | \bx_\ell^* \, , \delta ) }
\right] \, .
\end{equation}

\item[(2)]

Generate $\dn{ y }^{ * ( t ) } \, ( t = 1, \dots, T )$ from $m_\ell^N (
\dn{ y }^* | \, \dn{ y } ; \bx_\ell \, , \bx_\ell^* \, , \delta )$ and
estimate the marginal likelihood by
\begin{equation} \label{marginal_like_estim4}
\hspace*{-0.35in} \hat{ m }^{ PEP }_\ell ( \dn{ y } | \mt{ X }_\ell \, ,
\mt{ X }_\ell^* \, , \delta ) = m_0^N ( \dn{ y } | \bx_0, \bx_0^* ) \left[
\frac{ 1 }{ T } \sum_{ t = 1 }^T \frac{ m_\ell^N ( \dn{ y } | \, \dn{ y }^{
* ( t ) } ; \bx_\ell, \bx_\ell^* \, , \delta ) }{ m_0^N ( \dn{ y } | \,
\dn{ y }^{ * ( t ) } ; \bx_0, \bx_0^* \, , \delta ) } \frac{ m_0^N ( \dn{ y
}^{ * ( t ) } | \, \dn{ y } ; \bx_0, \bx_0^* \, , \delta ) }{ m_\ell^N (
\dn{ y }^{ * ( t ) } | \, \dn{ y } ; \bx_\ell, \bx_\ell^* \, , \delta ) }
\right] \, .
\end{equation}

\end{enumerate}

Monte-Carlo schemes (1) and (2) generate imaginary data from the posterior
predictive distribution of the model under consideration, and thus we
expect them to be relatively accurate. Moreover, in the second Monte-Carlo
scheme, when we estimate Bayes factors we only need to evaluate posterior
predictive distributions, which are available even in the case of improper
baseline priors. Closed-form expressions for the posterior predictive
distributions can be found in Section 2 of the web Appendix.

Using arguments similar to those in Section \ref{connection_j_z}, it is
clear that the marginal likelihoods $m^{ PEP }_\ell ( \dn{ y } | \mt{ X
}_\ell \, , \mt{ X }_\ell^* \, , \delta)$ under the two baseline prior
choices considered in this paper will yield the same posterior odds and
model probabilities for $g \rightarrow \infty$, $a_\ell = - \frac{ d_\ell
}{ 2 }$ and $b_\ell = 0$. This \vspace*{0.05in} is because the posterior
predictive densities involved in the expressions for $m^{ PEP }_\ell ( \dn{
y } | \mt{ X }_\ell \, , \mt{ X }_\ell^* \, , \delta)$ become the same for
the above-mentioned prior parameter values, while the corresponding prior
predictive density will be the same up to normalizing constants (common to
all models) that cancel out in the calculation.

\section{Consistency of the J-PEP Bayes factor}

\label{Limit}

Here we present a condensed version of a proof that Bayes factors based on
the J-PEP approach are consistent for model selection; additional details
are available in \cite{fouskakis-ntzoufras-2013}.

The PEP prior (\ref{pep}) can be rewritten as
\begin{eqnarray} \label{new-section-4-1}
\pi_\ell^{ PEP } ( \bbl, \varl | \bxl^*, \delta ) & = & \int \! \! \int
\pi_\ell^{ PEP } ( \bbl, \varl | \bbo, \varo ; \bxl^*, \delta ) \, \pi_0^N (
\bbo, \varo|\bxo^* ) \, d \bbo \, d \varo \, ,
\end{eqnarray}
in which the conditional PEP prior is given by
\begin{equation} \label{new-section-4-2}
\pi_\ell^{ PEP } ( \bbl, \varl | \bbo, \varo ; \bxl^*, \delta ) = \int
\frac{ f ( \by^* | \bbl, \varl, M_\ell ; \bxl^*, \delta ) \, f ( \by^* |
\bbo, \varo, M_0 ; \bx_0^*, \delta ) \, \pi^N ( \bbl, \varl | \bxl^* ) }{
m_\ell^N ( \by^* | \bxl^*, \delta ) } \, d \by^* \, .
\end{equation}
For the J-PEP prior, resulting from the baseline prior (\ref{JE}), it can
be shown --- following a line of reasoning similar to that in
\cite{moreno_etal_2003} --- that
\begin{eqnarray} \label{finalcondprior}
\pi_\ell^{ \textit{J-PEP} } ( \bbl, \varl | \bbo, \varo ; \bxl^*, \delta )
& = & \frac{ \Gamma ( \nstar - d_\ell ) }{ \Gamma ( \frac{ \nstar - d_\ell
}{ 2 } )^2 } ( \varo )^{ - \frac{ \nstar - d_\ell }{ 2 } } ( \varl )^{
\frac{ \nstar - d_\ell }{ 2 } - 1 } \left( 1 + \frac{ \varl }{ \varo }
\right)^{ - ( \nstar - d_\ell) } \, \times \nonumber \\
& & \hspace*{0.25in} f_{ N_{ \nstar } } \! \left[ \bbl ; \bboup, \delta (
\varl + \varo ) \invxtxstarl \right] \, ;
\end{eqnarray}
here $\bboup = ( \bbo^T, \dn{ 0 }_{ d_\ell - d_0 }^T )^T$ and $\dn{ 0 }_{ k
}$ is a vector of zeros of length $k$.

Following steps similar to those in \cite{moreno_etal_2003}, we find that
the Bayes factor of model $M_\ell$ versus the reference model $M_0$ (with
$M_0$ nested in $M_\ell$) is given by
\begin{equation} \label{bf_final}
BF_{ \ell \, 0 }^{ \textit{J-PEP} } = 2 \, \frac{ \Gamma \left( n - d_\ell
\right) }{ \Gamma \left( \frac{ n - d_\ell }{ 2 } \right)^2 } \bigints_{ 0
}^{ \frac{ \pi }{ 2 } } \frac{ ( \sin \phi )^{ n - d_0 - 1 } ( \cos \phi
)^{ n - d_\ell -1 } ( n + \sin^2 \phi )^{ \frac{ n - d_\ell }{ 2 } } }{
\left( n \frac{ RSS_\ell }{ RSS_0 } + \sin^2 \phi \right)^{ \frac{ n - d_0
}{ 2 } } } \, d \phi \, .
\end{equation}

\begin{theorem} \label{theorem 1}

For any two models $M_\ell$, $M_k \in {\cal M} \setminus \{ M_0 \}$ and for
large $n$, we have that
\begin{equation} \label{new-section-4-6}
-2 \, \log BF_{ \ell \, k }^{ \textit{J-PEP} } \approx n \log \frac{
RSS_\ell }{ RSS_k } + ( d_\ell - d_k ) \log n = BIC_\ell - BIC_k \, .
\end{equation}

\end{theorem}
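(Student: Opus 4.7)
The plan is to start from the integral representation (\ref{bf_final}), derive a large-$n$ expansion of $BF^{\textit{J-PEP}}_{\ell 0}$, and then obtain $BF^{\textit{J-PEP}}_{\ell k} = BF^{\textit{J-PEP}}_{\ell 0}/BF^{\textit{J-PEP}}_{k 0}$ by cancellation of contributions from the reference model $M_0$. Set $r_\ell = RSS_\ell/RSS_0$. First I would handle the Gamma prefactor using Legendre's duplication formula together with Stirling's series, yielding
\[
\log\!\Big\{\tfrac{2\,\Gamma(n-d_\ell)}{\Gamma((n-d_\ell)/2)^{2}}\Big\} = (n-d_\ell)\log 2 \,+\, \tfrac{1}{2}\log n \,+\, O(1).
\]

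Next I would apply Laplace's method to the integral, writing its integrand as $e^{h_\ell(\phi)}$ with
\[
h_\ell(\phi) = (n - d_0 - 1)\log\sin\phi + (n - d_\ell - 1)\log\cos\phi + \tfrac{n-d_\ell}{2}\log(n + \sin^2\phi) - \tfrac{n-d_0}{2}\log(n r_\ell + \sin^2\phi).
\]
Expanding the last two logarithms uniformly in $\phi \in [0,\pi/2]$ as $\log n + O(1/n)$ and $\log(nr_\ell) + O(1/n)$ shows that together they contribute $-\tfrac{d_\ell - d_0}{2}\log n - \tfrac{n-d_0}{2}\log r_\ell + O(1)$. Setting the leading part of $h_\ell'$ to zero gives $\tan^2\phi^* = (n-d_0)/(n-d_\ell) \to 1$, so $\sin^2\phi^* \to 1/2$; at this saddle the first two terms of $h_\ell$ contribute $-\tfrac{2n - d_0 - d_\ell}{2}\log 2 + O(1)$, and $h_\ell''(\phi^*) = -4n + O(1)$. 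Combining the Laplace factor $\sqrt{2\pi/|h_\ell''(\phi^*)|} \sim \sqrt{\pi/(2n)}$ with the Stirling expression above, the $\log 2$ and $\tfrac{1}{2}\log n$ contributions combine to an $O(1)$ quantity, yielding
\[
\log BF^{\textit{J-PEP}}_{\ell 0} = -\tfrac{d_\ell - d_0}{2}\log n \,-\, \tfrac{n-d_0}{2}\log r_\ell \,+\, O(1),
\]
where the $O(1)$ remainder depends on the competing model only through $d_\ell - d_0$.

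Forming the ratio $BF^{\textit{J-PEP}}_{\ell 0}/BF^{\textit{J-PEP}}_{k 0}$ then cancels all $O(1)$ and $RSS_0$-related contributions pairwise, giving
\[
-2\log BF^{\textit{J-PEP}}_{\ell k} = (d_\ell - d_k)\log n \,+\, (n-d_0)\log(RSS_\ell/RSS_k) \,+\, O(1).
\]
The discrepancy $(n-d_0)\log(RSS_\ell/RSS_k) - n\log(RSS_\ell/RSS_k) = -d_0\log(RSS_\ell/RSS_k)$ is $O(1)$ since $d_0$ is fixed, so it is absorbed into the BIC-order equivalence of (\ref{new-section-4-6}). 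The main technical obstacle is the careful bookkeeping of $O(1)$ remainders in the Laplace expansion and in Stirling's series, and in particular verifying that these remainders depend on the competing model only through $d_\ell - d_0$ (respectively $d_k - d_0$), so that they cancel cleanly in the ratio. Following the template of \cite{moreno_etal_2003}, who derive a structurally identical asymptotic for intrinsic-prior Bayes factors from the same type of integral, these cancellations hold and the theorem follows.
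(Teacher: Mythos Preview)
Your approach is correct and reaches the same conclusion, but by a somewhat different and more systematic route than the paper. The paper does not invoke Laplace's method; instead it substitutes three direct large-$n$ approximations --- $(n+\sin^2\phi)^{(n-d_\ell)/2}\approx n^{(n-d_\ell)/2}e^{\sin^2\phi/2}$, the analogous one for the denominator factor, and a Stirling-type estimate for the Gamma ratio --- and then simply bounds the remaining $\phi$-integral above by a finite, $n$-independent quantity via an inequality from \cite{casella_etal_2009}. Your Laplace expansion around $\phi^*\to\pi/4$ is more careful: it actually extracts the $2^{-n}n^{-1/2}$ behavior of the integral that is needed to cancel the $2^{\,n}n^{1/2}$ coming from Stirling, whereas the paper's mere upper bound on the integral, taken literally, leaves an uncompensated $-2n\log 2$ term in $-2\log BF_{\ell 0}$. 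In that sense your argument is the tighter of the two. One small overstatement in your write-up: the $O(1)$ remainders in your expansion of $\log BF_{\ell 0}^{\textit{J-PEP}}$ depend on $r_\ell=RSS_\ell/RSS_0$ as well (through the exponential factor evaluated at the saddle), not only on $d_\ell-d_0$, so they do not literally ``cancel pairwise'' when you form $BF_{\ell 0}/BF_{k 0}$; however, since the theorem asserts only a BIC-order equivalence, their difference remains $O(1)$ and the conclusion is unaffected.
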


\begin{proof}

For large $n$ we have that
\begin{eqnarray} \label{new-section-4-3}
( n + \sin^2 \phi )^{ \frac{ n - d_\ell }{ 2 } } & \approx & n^{ \frac{ n -
d_\ell }{ 2 } } \exp \left( \frac{ \sin^2 \phi }{ 2 } \right) \, , \nonumber
\\
\left( n \frac{ RSS_\ell }{ RSS_0 } + \sin^2 \phi \right)^{ \frac{ n - d_0
}{ 2 } } & \approx & \left( n \frac{ RSS_\ell }{ RSS_0 } \right)^{ \frac{ n
- d_0 }{ 2 } } \exp \left( \frac{ 1 }{ 2 } \sin^2 \phi \frac{ RSS_0 }{
RSS_\ell } \right) \, , \ \ \textrm{and} \\
\log \Gamma ( n - d_\ell ) - 2 \log \Gamma \left( \frac{ n - d_\ell }{ 2 }
\right) & \approx & \frac{ 1 }{ 2 } \log n + n \log 2 \, . \nonumber
\end{eqnarray}
From the above we obtain (\ref{new-section-4-6}), because of the integral 
inequality
\begin{equation} \label{new-section-4-5}
\bigint_{ 0 }^{ \frac{ \pi }{ 2 } } \frac{ ( \sin \phi )^{ n - d_0 - 1 } (
\cos \phi )^{ n - d_\ell - 1 } \exp \left( \frac{ \sin^2 \phi }{ 2 } \right)
}{ \exp \left( \frac{ 1 }{ 2 } \sin^2 \phi \, \frac{ RSS_0 }{ RSS_\ell }
\right) } \, d \phi \le \int_{ 0 }^{ \frac{ \pi }{ 2 } } \exp \left[ \frac{
\sin^2 \phi }{ 2 } \left( 1 - \frac{ RSS_0 }{ RSS_\ell } \right) \right] \,
d \phi \, ,
\end{equation}
which is true for any $n \ge ( d_0 + 1 )$ and $n \ge ( d_\ell + 1 )$.
\citet[p.~1216]{casella_etal_2009} have shown that the right-hand integral
in (\ref{new-section-4-5}) is finite for all $n$; therefore the left-hand
integral in (\ref{new-section-4-5}), which arises in the computation of
$BF_{ \ell \, 0 }^{ \textit{J-PEP} }$ via equation (\ref{bf_final}), is
also finite for all $n$.
\end{proof}

Therefore the J-PEP approach has the same asymptotic behavior as the
BIC-based variable-selection procedure. The following Lemma is a direct
result of (a) Theorem \ref{theorem 1} above and (b) Theorem 4 of
\cite{casella_etal_2009}.

\vspace*{0.1in}

\begin{lemma} \label{lemma 1}

Let $M_\ell \in {\cal M}$  be a Gaussian regression model of type
(\ref{new2-1}) such that
\[
\lim \limits_{ n \rightarrow \infty } \frac{ \mt{ X }_T \left[ \mt{ I }_n -
\mt{ X }_\ell ( \mt{ X }_\ell^T \, \mt{ X }_\ell )^{ -1 } \mt{ X }_\ell^T
\right] \mt{ X }_T }{ n } \mbox{ is a positive semi-definite matrix},
\]
in which $X_T$ is the design matrix of the true data-generating regression
model $M_T \neq M_j$. \vspace*{0.05in} Then the variable selection
procedure based on the J-PEP Bayes factor is consistent, since $BF^{ J-PEP
}_{ jT } \rightarrow 0$ as $n \rightarrow \infty$.

\end{lemma}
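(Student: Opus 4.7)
The plan is to use Theorem \ref{theorem 1} as a bridge to the BIC and then to apply the BIC consistency result of \cite{casella_etal_2009}. Applied to the pair $(M_j, M_T)$, Theorem \ref{theorem 1} gives
\[
-2 \log BF_{jT}^{\textit{J-PEP}} \; = \; n \log \frac{RSS_j}{RSS_T} + (d_j - d_T) \log n + r_n,
\]
with a remainder $r_n$ coming from the Taylor and Stirling approximations collected in (\ref{new-section-4-3}). Provided that $r_n = O(1)$ (or at worst $o(\log n)$), the sign and rate of divergence of $-2 \log BF_{jT}^{\textit{J-PEP}}$ coincide with those of $BIC_j - BIC_T$, so Theorem 4 of \cite{casella_etal_2009}, which asserts BIC consistency precisely under the stated positive semi-definiteness hypothesis, immediately yields $BF_{jT}^{\textit{J-PEP}} \to 0$.

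\textbf{Case analysis for the BIC difference.} For completeness I would split into the two standard cases. In the \emph{overfit} case $M_T \subsetneq M_j$, the projection identity $\bigl( \mt{I}_n - \mt{X}_j ( \mt{X}_j^T \mt{X}_j)^{-1} \mt{X}_j^T \bigr) \mt{X}_T = \mt{0}$ implies $RSS_j - RSS_T$ is a quadratic form in the noise alone, asymptotically $\sigma_T^2 \chi^2_{d_j - d_T}$; hence $n \log(RSS_j/RSS_T) = O_p(1)$, while the BIC penalty $(d_j - d_T)\log n$ diverges to $+\infty$, so the right-hand side above tends to $+\infty$. In the \emph{underfit or non-nested} case $M_T \not\subset M_j$, the PSD hypothesis ensures that the limit of $\dn{\beta}_T^T \mt{X}_T^T \bigl( \mt{I}_n - \mt{X}_j (\mt{X}_j^T \mt{X}_j)^{-1} \mt{X}_j^T \bigr) \mt{X}_T \dn{\beta}_T / n$ is a positive constant $c > 0$, so that $RSS_j/n \to \sigma_T^2 + c$ while $RSS_T/n \to \sigma_T^2$. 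Consequently $\log(RSS_j/RSS_T)$ tends to a positive constant and $n \log(RSS_j/RSS_T)$ diverges linearly, dominating the $\log n$ penalty irrespective of its sign.

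\textbf{Main obstacle.} The subtlety lies in the overfit case, where the divergent signal is only of order $\log n$, so nothing weaker than $r_n = O(1)$ will do. This amounts to making the three approximations in (\ref{new-section-4-3}) quantitative and uniform in $\phi \in [0, \pi/2]$: the two polynomial expansions of $(n + \sin^2 \phi)^{(n-d)/2}$ and $\bigl( n\, RSS_\ell/RSS_0 + \sin^2 \phi \bigr)^{(n-d_0)/2}$ produce error terms bounded by constants on that interval, and the Stirling expansion of $\log \Gamma(n-d) - 2 \log \Gamma((n-d)/2)$ equals $\tfrac{1}{2}\log n + n \log 2 + O(1)$. Combining these with the uniformly bounded integral supplied by the right-hand side of (\ref{new-section-4-5}) and the cited result of \citet[p.~1216]{casella_etal_2009} leaves an $O(1)$ residual, which is enough to close the argument and conclude $BF_{jT}^{\textit{J-PEP}} \to 0$ in probability.
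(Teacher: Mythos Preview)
Your approach is correct and matches the paper's own argument, which is literally a one-sentence invocation: Lemma \ref{lemma 1} is stated as ``a direct result of (a) Theorem \ref{theorem 1} above and (b) Theorem 4 of \cite{casella_etal_2009},'' with no further proof given. Your case analysis (overfit versus underfit/non-nested) and your explicit control of the remainder $r_n$ from the Stirling and Taylor approximations go well beyond what the paper provides, and they address a genuine gap in the paper's informal ``$\approx$'' in Theorem \ref{theorem 1}; in that sense your write-up is more complete than the original.
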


\section{Experimental results}

\label{examples}

In this Section we illustrate the PEP-prior methodology with two case
studies --- one simulated, one real --- and we perform sensitivity analyses
to verify the stability of our findings; results are presented for both
Z-PEP and J-PEP. In both cases, the marginal likelihood (\ref{new2-17}) is
not analytically tractable, and therefore initially we evaluate the four
Monte-Carlo marginal-likelihood approaches given in Section
\ref{marginal_likelihood_pep_gen} above and in Section 3 of the web
Appendix. Then we present results for $n^* = n$, followed by an extensive
sensitivity analysis over different values of $n^*$. Our results are
compared with those obtained using (a) the EPP with minimal training sample,
power parameter $\delta = 1$ and the independence Jeffreys prior as baseline
(we call this approach J-EPP) and (b) the expected intrinsic Bayes factor
(EIBF), i.e., the arithmetic mean of the IBFs over different minimal
training samples (in Section \ref{comparisons-ibf-j-epp} we also make some
comparisons between the Z-PEP, J-PEP and IBF methods). Implementation
details for J-EPP can be found in \cite{fouskakis_ntzoufras_2012}, while
computational details for the EIBF approach are provided in Section 4 of the
web Appendix. In all illustrations the design matrix $\mt{ X }^*$ of the
imaginary/training data is selected as a random subsample of size $n^*$ of
the rows of $\mt{ X }$.

Note that, since \cite{perez_berger_2002} have shown that Bayes factors
from the J-EPP approach become identical to those from the EIBF method as
the sample size $n \rightarrow \infty$ (with the number of covariates $p$
fixed), it is possible (for large $n$) to use EIBF as an approximation to
J-EPP that is computationally much faster than the full J-EPP calculation.
We take advantage of this fact below: for example, producing the results in
Table \ref{ozonetab42} would have taken many days of CPU time with J-EPP;
instead, essentially equivalent results were available in hours with EIBF.
For this reason, one can regard the labels ``J-EPP" and ``EIBF" as more or
less interchangeable in what follows.

\subsection{A simulated example}

\label{sim}

Here we illustrate the PEP method by considering, as a case study, the
simulated data set of \cite{nott_kohn_05}. This data set consists of $n =
50$ observations with $p = 15$ covariates. The first 10 covariates are
generated from a multivariate Normal distribution with mean vector $\dn{ 0
}$ and covariance matrix $\mt{ I }_{ 10 }$, while
\begin{equation} \label{new3-1}
X_{ ij } \sim N \big( 0.3 X_{ i1 } + 0.5 X_{ i2 } + 0.7 X_{ i3 } + 0.9 X_{
i4 } + 1.1 X_{ i5 }, 1 \big) \mbox{ for } ( j = 11, \dots, 15; i = 1,
\dots, 50 ) \, ,
\end{equation}
and the response is generated from
\begin{equation} \label{ss}
Y_i \sim N \big( 4 + 2 X_{ i1 } - X_{ i5 } + 1.5 X_{ i7 } + X_{ i, 11} +
0.5 X_{ i, 13 }, 2.5^2 \big) \mbox{ for } i = 1, \dots, 50 \, .
\end{equation}
With $p = 15$ covariates there are only 32,768 models to compare; we were
able to conduct a full enumeration of the model space, obviating the need
for a model-search algorithm in this example.

\subsubsection{PEP prior results}

\label{pi-g-prior-1}

To check the efficiency of the four Monte-Carlo marginal-likelihood
estimates (the first two of which are detailed in Section
\ref{marginal_likelihood_pep_gen} above, and the second two in Section 3 of
the web Appendix), we initially performed a small experiment. For Z-PEP, we
estimated the logarithm of the marginal likelihood for models $( X_1 + X_5 +
X_7 + X_{ 11 } )$ and $( X_1 + X_7 +X_{ 11 } )$, by running each Monte-Carlo
technique 100 times for 1,000 iterations and calculating the Monte-Carlo
standard errors. For both models the first and second Monte-Carlo schemes
produced Monte-Carlo standard errors of approximately 0.03, while the
Monte-Carlo standard errors of the third and fourth schemes were larger by
multiplicative factors of 30 and 20, respectively. In what follows,
therefore, we used the first and second schemes; in particular we employed
the first scheme for Z-PEP and the second scheme for J-PEP, holding the
number of iterations constant at 1,000.

\begin{table}[t!]

\caption{\textit{Posterior model probabilities for the best models,
together with Bayes factors for the Z-PEP MAP model ($M_1$) against $M_j, j
= 2, \dots, 7$, for the Z-PEP and J-PEP prior methodologies, in the
simulated example of Section \ref{sim}.}}

\label{ex2tab1b}

\begin{center}

\begin{tabular}{c|l|cc|ccc}

\multicolumn{1}{c}{} & & \multicolumn{2}{c|}{Z-PEP} &
\multicolumn{3}{c}{J-PEP} \\ \cline {3-7}

\multicolumn{1}{c}{} & & Posterior Model & Bayes & & Posterior Model &
Bayes\\

$M_j $ & \multicolumn{1}{c|}{Predictors} & Probability & Factor & Rank &
Probability & Factor \\

\hline

1 & $X_1 + X_5 + X_7 + X_{ 11 }$ & 0.0783 & 1.00 & (2) & 0.0952 & 1.00 \\

2 & $X_1 + X_7 + X_{ 11 }$ & 0.0636 & 1.23 & (1) & 0.1054 & 0.90 \\

3 & $X_1 + X_5 + X_6 + X_7 + X_{ 11 }$ & 0.0595 & 1.32 & (3) & 0.0505 &
1.88 \\

4 & $X_1 + X_6 + X_7 + X_{ 11 }$ & 0.0242 & 3.23 & (4) & 0.0308 & 3.09 \\

5 & $X_1 + X_7 + X_{ 10 } + X_{ 11 }$ & 0.0175 & 4.46 & (5) & 0.0227 & 4.19
\\

6 & $X_1 + X_5 + X_7 + X_{ 10 } + X_{ 11 }$ & 0.0170 & 4.60 & (9) & 0.0146
& 6.53 \\

7 & $X_1 + X_5 + X_7 + X_{ 11 } + X_{ 13 }$ & 0.0163 & 4.78 & (10) & 0.0139
& 6.87 \\

\end{tabular}

\end{center}

\vspace*{-0.25in}

\end{table}

Table \ref{ex2tab1b} presents the posterior model probabilities (with a
uniform prior on the model space) for the best models in (a single
realization of) the Nott-Kohn model, together with Bayes factors, for the
Z-PEP and J-PEP prior methodologies. The maximum a-posteriori (MAP) model
for the Z-PEP prior includes four of the five true effects; the
data-generating model is seventh in rank due to the small effect of $X_{ 13
}$. Moreover, note that when using the J-PEP prior the methodology is more
parsimonious; the MAP model is now $X_1 + X_7 + X_{ 11 }$, which is the
second-best model under the Z-PEP approach. When we focus on posterior
inclusion probabilities (results omitted for brevity) rather than posterior
model probabilities and odds, J-PEP supports systematically more
parsimonious models than Z-PEP, but no noticeable differences between the
inclusion probabilities using the two priors are observed (with the largest
difference seen in the inclusion probabilities of $X_5$; these are about 0.5
for Z-PEP and about 0.4 for J-PEP).

\subsubsection{Sensitivity analysis for the imaginary/training sample size
$n^*$}

\label{sensitivity-analysis-for-n-star}

To examine the sensitivity of the PEP approach to the sample size $n^*$ of
the imaginary/training data set, we present results for $n^* = 17, \dots,
50$: Figure \ref{ex2plot3b} displays posterior marginal variable-inclusion
probabilities (in the same single realization of the Nott-Kohn model that
led to Table \ref{ex2tab1b}). As noted previously, to specify $\mt{ X }^*$
when $n^* < n$ we randomly selected a subsample of the rows of the original
matrix $\mt{ X }$. Results are presented for Z-PEP; similar results for
J-PEP have been omitted for brevity. It is evident that posterior inclusion
probabilities are quite insensitive to a wide variety of values of $n^*$,
while more variability is observed for smaller values of $n^*$; this arises
from the selection of the subsamples used for the construction of $\mt{ X
}^*$. The picture for the posterior model probabilities (not shown) is
similar.

\begin{figure}[t!]

\centering

\caption{\textit{Posterior marginal inclusion probabilities, for $n^*$
values from $17$ to $n = 50$, with the Z-PEP prior methodology, in the
simulated example of Section \ref{sim}.}}

\vspace*{-0.2in}

\psfrag{nstar}[c][c][0.9]{Imaginary/Training Data Sample Size ($n^*$)}

\includegraphics[ scale = 0.7 ]{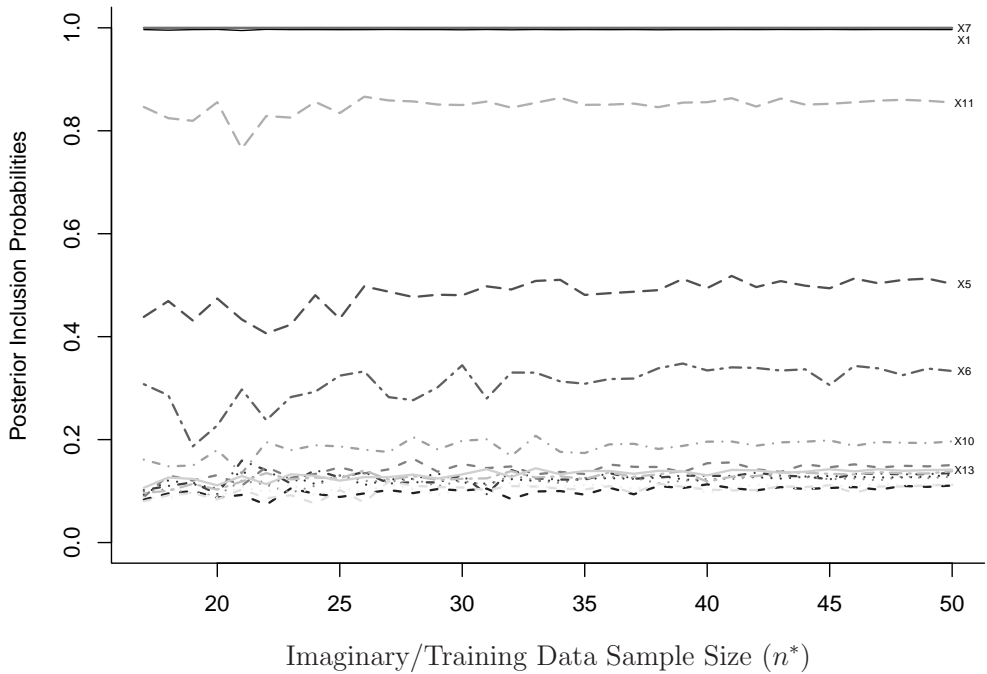}

\vspace*{-0.15in}

\label{ex2plot3b}

\end{figure}

To further examine the stability of this conclusion, we generated an
additional 50 data sets from the Nott-Kohn sampling scheme (\ref{new3-1},
\ref{ss}) and repeated the analysis that led to Figure \ref{ex2plot3b}, in
this case for all of the true non-zero effects in this model ($X_1, X_5,
X_7, X_{ 11 }$ and $X_{ 13 }$). The evolution of the posterior marginal
inclusion probabilities as a function of $n^*$ for each of the non-zero
effects is presented in the right-hand column of Figure
\ref{sens_nstar_sim_st}; in the left-hand column the corresponding medians
and quartiles of the same quantities (over all 50 samples) are depicted. The
results are similar to those in Figure \ref{ex2plot3b}: for each data set,
posterior marginal inclusion probabilities are remarkably insensitive to a
wide variety of values of $n^*$. We draw the key conclusion from these
analyses that one can use $n^* = n$ and dispense with training samples
altogether in the PEP methodology; this yields all the advantages mentioned
earlier (increased stability of the resulting Bayes factors, removal of the
arbitrariness arising from individual training-sample selections, and
substantial increases in computational speed, allowing many more models to
be compared within a fixed CPU budget).

\begin{figure}[p]

\centering

\caption{\textit{Posterior marginal inclusion probabilities $P( \gamma_j = 1
| \dn{ y } )$ (right column) together with their medians and quartiles (left
column) over the 50 additional samples from the Nott-Kohn model, for each of
the non-zero effects ($j \in \{ 1, 5, 7, 11, 13 \}$) and for $n^*$ varying
from 17 to 50.}}

\label{sens_nstar_sim_st}

\vspace*{0.2in}

\psfrag{n*}[c][c][0.9]{$n^*$}

\psfrag{PIPX1}[c][c][0.8]{$P(\gamma_1=1 | \dn{y})$}

\psfrag{PIPX5}[c][c][0.8]{$P(\gamma_5=1 | \dn{y})$}

\psfrag{PIPX7}[c][c][0.8]{$P(\gamma_7=1 | \dn{y})$}

\psfrag{PIPX11}[c][c][0.8]{$P(\gamma_{11}=1 | \dn{y})$}

\psfrag{PIPX13}[c][c][0.8]{$P(\gamma_{13}=1 | \dn{y})$}

\includegraphics[ scale = 0.725 ]{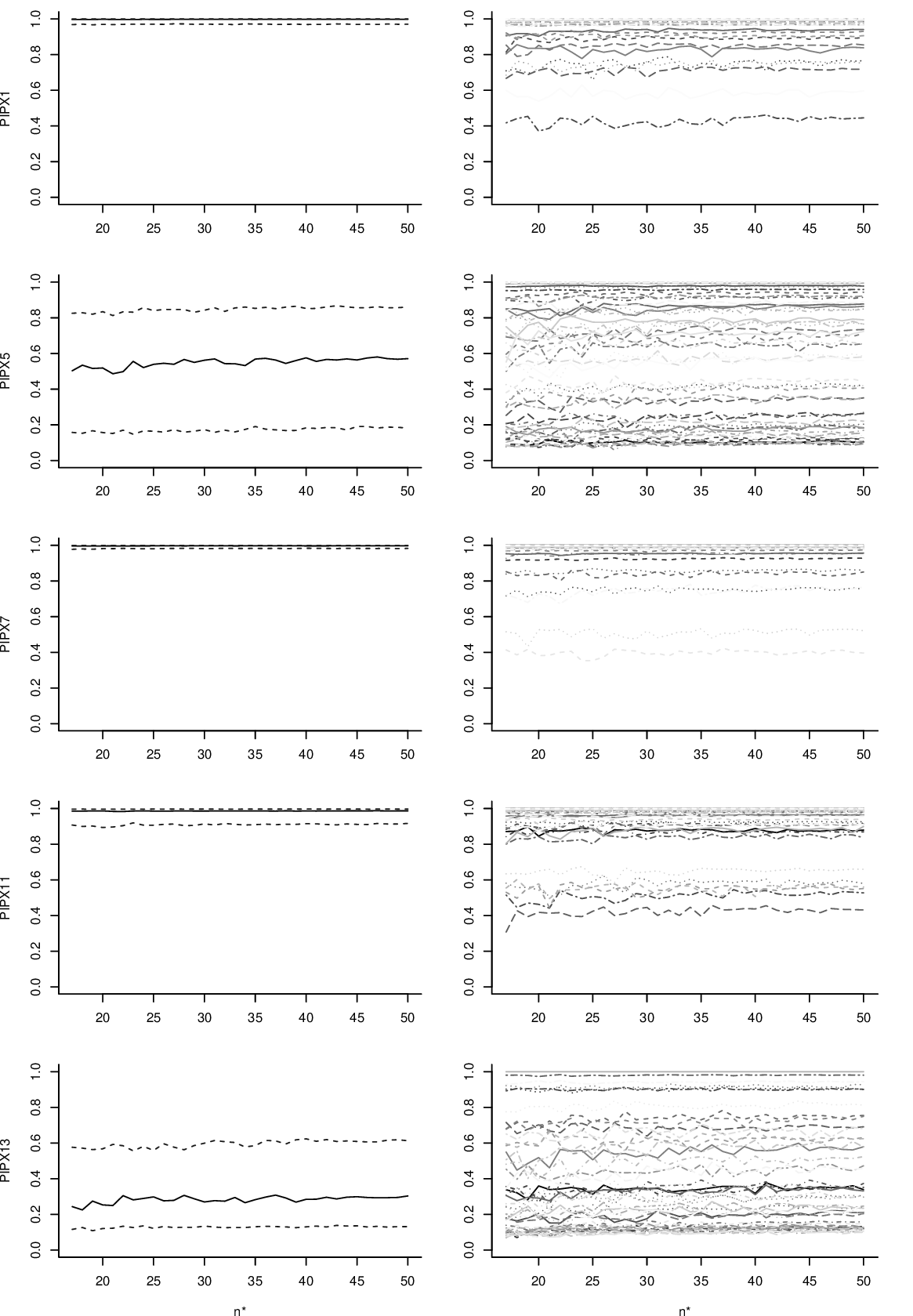}

\end{figure}

One of the main features of PEP is its unit-information property, an
especially important consideration when $p$ is a substantial fraction of
$n$; as noted in Section \ref{introduction}, this situation arises with some
frequency in disciplines such as economics and genomics. In contrast to PEP,
the EPP --- which is equivalent to the intrinsic prior --- can be highly
influential when $n$ is not much larger than $p$. To illustrate this point
we kept the first $n = 20$ observations from the single simulated data set
that led to Figure \ref{ex2plot3b} and considered a randomly selected
training sample of minimal size ($n^* = 17$). Figure
\ref{posterior_box_plots} presents the posterior distribution of the
regression coefficients for PEP ($\delta = n^*$) and for EPP ($\delta = 1$),
in comparison with the MLEs (solid horizontal lines). From this figure it is
clear that the PEP prior produces posterior results identical to the MLEs,
while EPP has a substantial unintended impact on the posterior distribution
(consider in particular the marginal posteriors for $\beta_2, \beta_7,
\beta_9, \beta_{ 11 }$ and $\beta_{ 12 }$). Moreover, the variability of the
resulting posterior distributions using the PEP approach is considerably
smaller (in this regard, consider especially the marginal posteriors for
$\beta_5, \beta_7$ and $\beta_{ 11 }$).

\begin{figure}[t!]

\centering

\caption{\textit{Boxplots of the posterior distributions of the regression
coefficients. For each coefficient, the left-hand boxplot summarizes the EPP
results and the right-hand boxplot displays the Z-PEP posteriors; solid
lines in both posteriors identify the MLEs. We used the first 20
observations from the simulated data-set, in the example of Section
\ref{pi-g-prior-1} that led to Figure \ref{ex2plot3b}, and a randomly
selected training sample of size $n^* = 17$.}}

\label{posterior_box_plots}

\vspace*{-0.3in}

\includegraphics[ scale = 0.8 ]{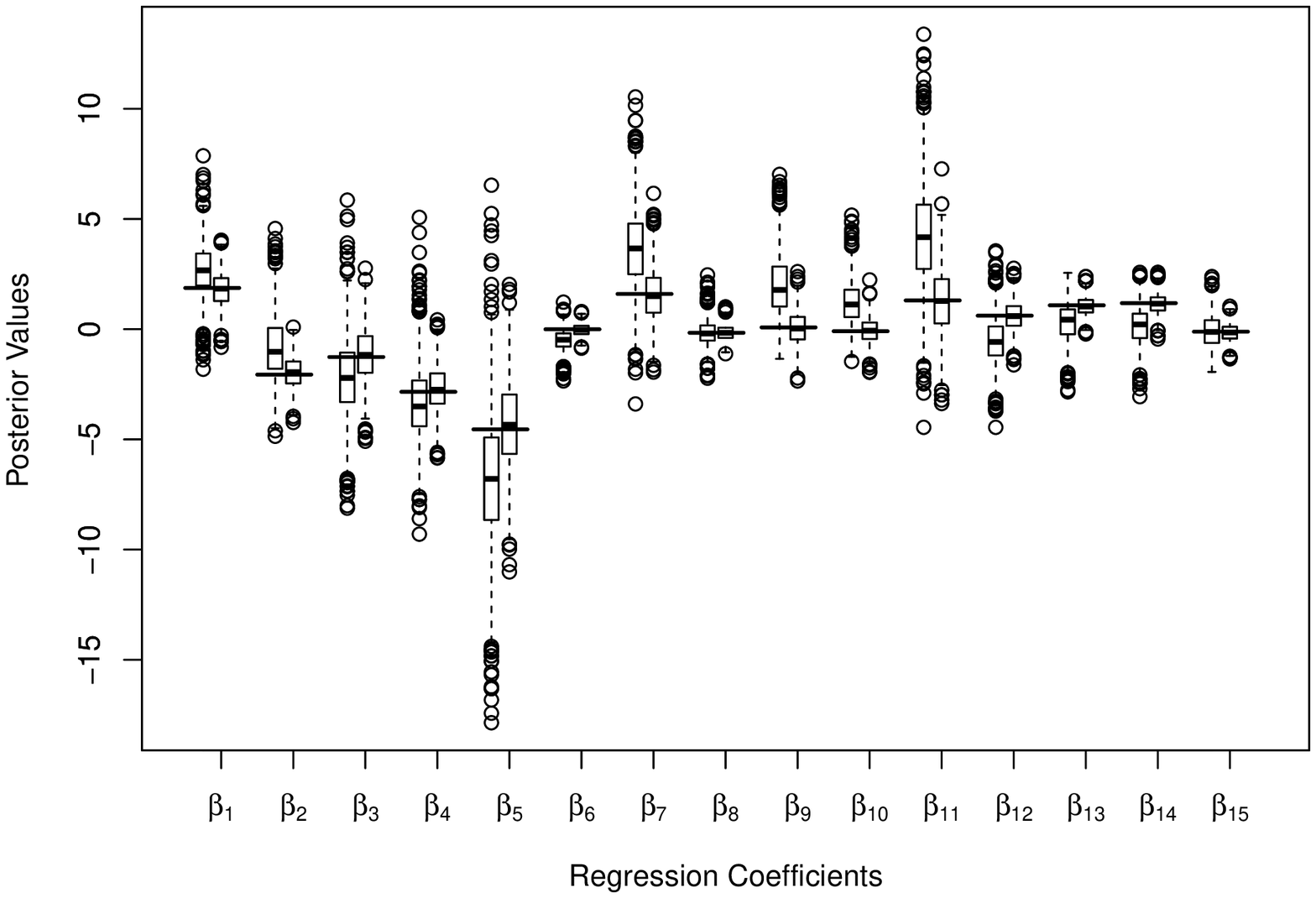}

\vspace*{-0.1in}

\end{figure}

\subsubsection{Comparisons with the intrinsic-Bayes-factor (IBF) and J-EPP
approaches}

\label{comparisons-ibf-j-epp}

Here we compare the PEP Bayes factor between the two best models ($( X_1 +
X_5 + X_7 + X_{ 11 } )$ and $( X_1 + X_7 + X_{ 11 } )$) with the
corresponding Bayes factors using J-EPP and IBF. For IBF and J-EPP we
randomly selected 100 training samples of size $n^* = 6$ (the minimal
training sample size for the estimation of these two models) and $n^* = 17$
(the minimal training sample size for the estimation of the full model with
all $p = 15$ covariates), while for Z-PEP and J-PEP we randomly selected
100 training samples of sizes $n^* = \{ 6, 17, 30, 40, 50 \}$. Each
marginal-likelihood estimate in PEP was obtained with 1,000 iterations,
using the first and second Monte-Carlo schemes for Z-PEP and J-PEP,
respectively, and in J-EPP with 1,000 iterations, using the second
Monte-Carlo scheme. Figure \ref{ex2comparison1a} presents the results as
parallel boxplots, and motivates the following observations:

\begin{figure}[t!]

\centering

\caption{\textit{Boxplots of the Intrinsic Bayes Factor (IBF) and Bayes
factors using the J-EPP, Z-PEP and J-PEP approaches, on a logarithmic scale,
in favor of model $( X_1 + X_5 + X_7 + X_{ 11 } )$ over model $( X_1 + X_7 +
X_{ 11 } )$ in the simulated example of Section \ref{pi-g-prior-1}. For IBF
and J-EPP, training samples of size $n^* =$ 6 and 17 were used; for both PEP
priors we used $n^* = \{ 6, 17, 30, 40, 50 \}$. In the boxplot labels on the
vertical axis, letters indicate methods and numbers signify training sample
sizes.}}

\label{ex2comparison1a}

\vspace*{-0.25in}

\psfrag{PEPP_J50}[c][c][0.9]{\footnotesize \sf J-PEP 50 \hspace{1em}}

\psfrag{PEPP_J40}[c][c][0.9]{\footnotesize \sf J-PEP 40 \hspace{1em}}

\psfrag{PEPP_J30}[c][c][0.9]{\footnotesize \sf J-PEP 30 \hspace{1em}}

\psfrag{PEPP_J17}[c][c][0.9]{\footnotesize \sf J-PEP 17 \hspace{1em}}

\psfrag{PEPP_J6}[c][c][0.9]{\footnotesize \sf J-PEP 6 \hspace{1em}}

\psfrag{PEPP_Z50}[c][c][0.9]{\footnotesize \sf Z-PEP 50 \hspace{1em}}

\psfrag{PEPP_Z40}[c][c][0.9]{\footnotesize \sf Z-PEP 40 \hspace{1em}}

\psfrag{PEPP_Z30}[c][c][0.9]{\footnotesize \sf Z-PEP 30 \hspace{1em}}

\psfrag{PEPP_Z17}[c][c][0.9]{\footnotesize \sf Z-PEP 17 \hspace{1em}}

\psfrag{PEPP_Z6}[c][c][0.9]{\footnotesize \sf Z-PEP 6 \hspace{1em}}

\psfrag{EPP_J17}[c][c][0.9]{\footnotesize \sf J-EPP 17 \hspace{1em}}

\psfrag{EPP_J6}[c][c][0.9]{\footnotesize \sf J-EPP 6 \hspace{1em}}

\psfrag{EPP_J17}[c][c][0.9]{\footnotesize \sf J-EPP 17 \hspace{1em}}

\psfrag{EPP_J6}[c][c][0.9]{\footnotesize \sf J-EPP 6 \hspace{1em}}

\psfrag{IBF_17}[c][c][0.9]{\footnotesize \sf IBF 17 \hspace{1em}}

\psfrag{IBF_6}[c][c][0.9]{\footnotesize \sf IBF 6 \hspace{1em}}

\psfrag{Log Bayes Factor}[c][c][1.0]{\sf Log Bayes Factor}

\includegraphics[ scale = 0.6, angle = -90 ]{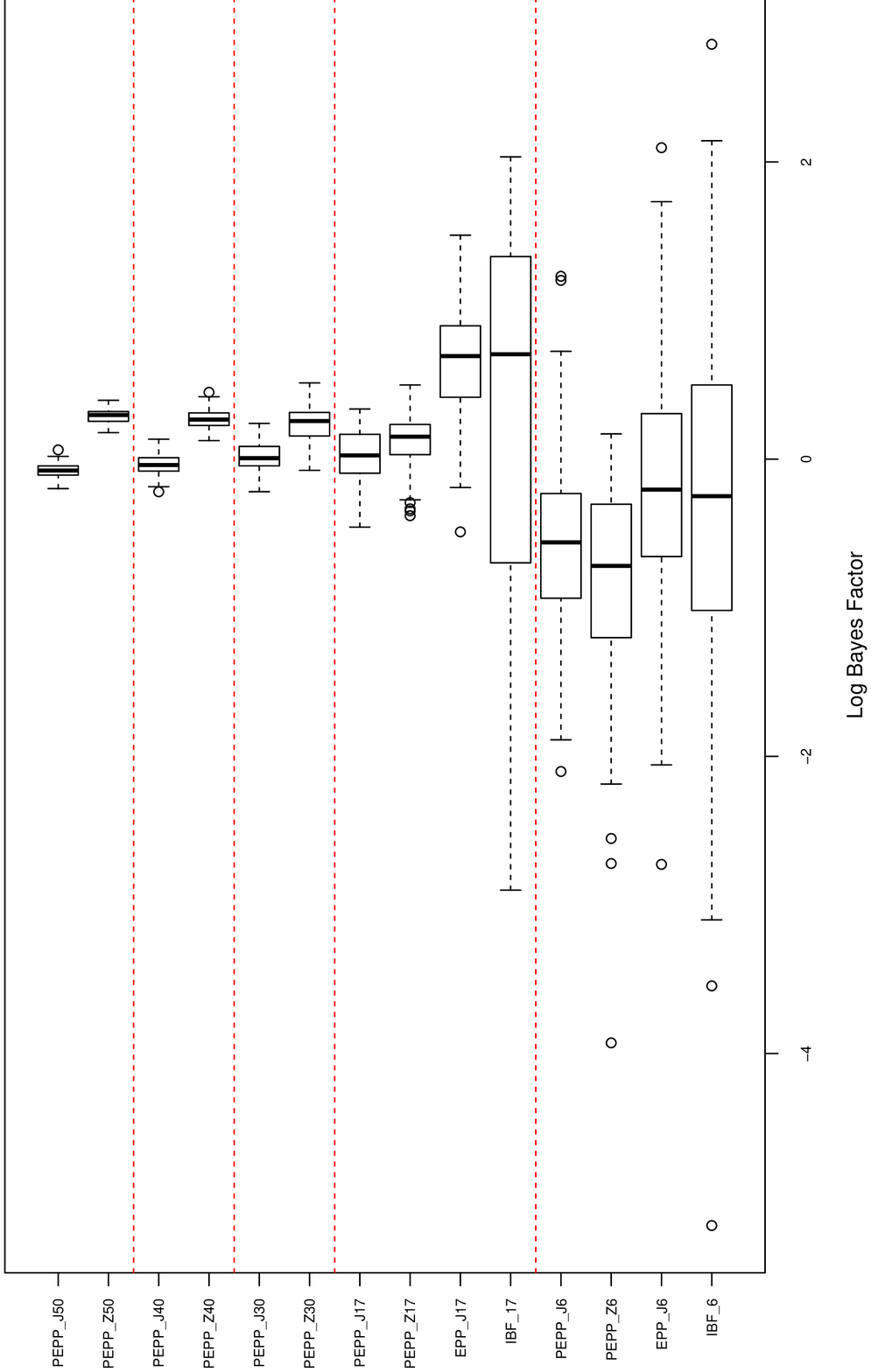}

\end{figure}

\begin{itemize}

\item

For $n^* =$ 6 and 17, although there are some differences between the
median log Bayes factors across the four approaches, the variability across
random training samples is so large as to make these differences small by
comparison; none of the methods finds a marked difference between the two
models.

\item

With modest $n^*$ values, which would tend to be favored by users for their
advantage in computing speed, the IBF method exhibited an extraordinary
amount of instability across the particular random training samples chosen:
with $n^* = 6$ the observed variability of IBF estimated Bayes factors
across the 100 samples was from $e^{ -5.16 } \doteq 0.005$ to $e^{ +2.48 }
\doteq 11.89$, a multiplicative range of more than 2,300, and with $n^* =
17$ the corresponding span was from $e^{ -2.90 } \doteq 0.055$ to $e^{
+2.03 } \doteq 7.61$, a multiplicative variation of about 138. (This
instability was observed by the original authors of IBF
\citep{berger_pericchi_96b}, and for this reason they recommended the use
of either the Median IBF or theoretical intrinsic priors. These
recommendations were combined with the Cauchy-Binet Theorem in order to
compute an average of determinants of sub-matrices required for these
quantities; see, e.g., \cite{berger_pericchi_2004}.)

The instability of the J-EPP approach across training samples was smaller
than with IBF but still large: for J-EPP the range of estimated Bayes
factors for $n^* = 6$ was from $e^{ -2.72 } \doteq 0.065$ to $e^{ +2.09 }
\doteq 8.08$ (a multiplicative span of about 125); the corresponding values
for $n^* = 17$ were from 0.61 to 4.51, a multiplicative range of 7.4. The
analogous multiplicative spans for Z-PEP were considerably smaller: 60.22,
2.41 and 1.24, respectively, for $n^* = 6, 17$ and 50; similarly for J-PEP
the corresponding multiplicative ranges were 28.01, 2.21 and 1.30.

\item

Figure \ref{ex2comparison1a} highlights the advantage of using $n^* = n$
with the PEP approach over the IBF and J-EPP methods with modest training
samples: the Monte-Carlo uncertainty introduced in the IBF and J-EPP
methods by the need to choose a random training sample creates a remarkable
degree of sensitivity in those approaches to the particular samples chosen,
and this undesirable behavior is entirely absent with the $n^* = n$ version
of the PEP method. The observed variability for $n^* = n$ in the PEP
approach is due solely to Monte-Carlo noise in the marginal-likelihood
computation.

\end{itemize}

\subsection{Variable selection in the Breiman-Friedman ozone data set}

\label{ozone_data}

In this Section we use, as a second case study, a data set often examined in
variable-selection studies --- the ozone data of
\cite{breiman_friedman_1985} --- to implement the Z-PEP and J-PEP approaches
and make comparisons with other methods. The scientific purpose of building
this data set was to study the relationship between ozone concentration and
a number of meteorological variables, including temperature, wind speed,
humidity and atmospheric pressure; the data are from a variety of locations
in the Los Angeles basin in 1976. The data set we used was slightly modified
from its form in other studies, based on preliminary exploratory analyses we
performed; our version of the data set has $n = 330$. As a response we used
a standardized version of the logarithm of the ozone variable of the
original data set. The standardized versions of 9 main effects, 9 quadratic
terms, 2 cubic terms, and 36 two-way interactions (a total of 56 explanatory
variables) were included as possible covariates. (Further details concerning
the final data set used in this Section are provided in Section 5 of the web
Appendix.)

\subsubsection{Searching the model space}

\label{model-space-search}

Full-enumeration search for the full space with 56 covariates was
computationally infeasible, so we used a model-search algorithm (based on
$MC^3$), given in Section 6 of the web Appendix, for the Z-PEP prior
methodology and the EIBF approach. For Z-PEP we used the first Monte-Carlo
marginal-likelihood scheme with 1,000 iterations; for EIBF we employed 30
randomly-selected minimal training samples ($n^* = 58$).

With such a large number of predictors, the model space in our problem was
too large for the $MC^3$ approach to estimate posterior model probabilities
with high accuracy in a reasonable amount of CPU time. For this reason, we
implemented the following two-step method:

\begin{enumerate}

\item[(1)]

First we used $MC^3$ to identify variables with high posterior marginal
inclusion probabilities $P ( \gamma_j = 1 | \by )$, and we then created a
reduced model space consisting only of those variables whose marginal
probabilities were above a threshold value. According to
\cite{barbieri_berger_2004}, this method of selecting variables may lead to
the identification of models with better predictive abilities than
approaches based on maximizing posterior model probabilities. Although
Barbieri and Berger proposed 0.5 as a threshold value for $P ( \gamma_j = 1
| \by )$, we used the lower value of 0.3, since our aim was only to
identify and eliminate variables not contributing to models with high
posterior probabilities. The inclusion probabilities were based on the
marginal-likelihood weights for the visited models.

\item[(2)]

Then we used the same model search algorithm as in step (1) in the reduced
space to estimate posterior model probabilities (and the corresponding
odds).

\end{enumerate}

Initially we ran $MC^3$ for 100,000 iterations for both the Z-PEP and EIBF
approaches. The reduced model space was formed from those variables that
had posterior marginal inclusion probabilities above 0.3 in either run.
With this approach we reduced the initial list of $p = 56$ available
candidates down to 22 predictors; Section 7 in the web Appendix lists these
covariates.

In the reduced model space we then ran $MC^3$ for 220,000 iterations for
the J-PEP, Z-PEP and EIBF approaches. For J-PEP we used the second
Monte-Carlo scheme with 1,000 iterations, for Z-PEP we employed the first
Monte-Carlo scheme (also with 1,000 iterations), and for EIBF we used 30
randomly-selected minimal training samples ($n^* = 24$). The resulting
posterior model odds for the five best models under each approach are given
in Table \ref{ozonetab42}. The MAP model under the Z-PEP approach was the
only one that appeared in the five most probable models in all approaches
(with rank 2 in J-PEP and rank 5 in EIBF). From this table it is clear that
the J-PEP approach supports the most parsimonious models; at the other
extreme, EIBF gives the least support to the most parsimonious models. When
attention is focused on posterior inclusion probabilities (not shown here),
the conclusions are similar: the three methods give approximately equal
support to the most prominent covariates, while for the less important
predictors the posterior inclusion probabilities are highest for EIBF,
lower for Z-PEP, and lowest for J-PEP. This confirms that the PEP
methodology supports more parsimonious models than the EIBF approach.

\begin{table}[t!]

\begin{center}

\caption{\textit{Posterior odds ($PO_{ 1 k }$) of the five best models
within each analysis versus the current model $k$, for the reduced model
space of the ozone data set. Variables common in all three analyses were
$X_1 + X_2  + X_8 + X_9 + X_{ 10 } + X_{ 15 } + X_{ 16 } + X_{ 18 } + X_{ 43
}$.}}

\label{ozonetab42}

\begin{tabular}{ccc|c@{}c@{}c@{}c@{ }|c@{ }|c}

\multicolumn{8}{c}{} \\

\multicolumn{8}{c}{J-PEP} \\

\hline

\multicolumn{3}{c|}{Ranking} & & & & & Number of & Posterior \\ \cline{1-3}

J-PEP & Z-PEP & EIBF & \multicolumn{4}{c|}{Additional Variables} &
Covariates & Odds $PO_{ 1 k }$ \\

\hline \hline

1 & ($>$5) & ($>$5) & & & & & 9 & 1.00 \\

2 & (1) & (5) & & $X_7 + X_{ 12 } + X_{ 13 }$ & $+ X_{ 20 }$ & & 13 & 1.29
\\

3 & ($>$5) & ($>$5) & & $X_7~~~~~~~~~ + X_{ 13 }$ & $+ X_{ 20 }$ & & 12 &
1.46 \\

4 & ($>$5) & ($>$5) & & $~~~~~~~~~X_{ 12 }~~~~~~~~~$ & $+ X_{ 20 }$ & & 11
& 1.87 \\

5 & ($>$5) & ($>$5) & & $~~~~~~~~~X_{ 12 }~~~~~~~~~$ &  & & 10 & 2.08 \\

\multicolumn{8}{c}{}\\

\multicolumn{8}{c}{Z-PEP}\\

\hline

\multicolumn{3}{c|}{Ranking} & & & & & Number of & Posterior \\ \cline{1-3}

Z-PEP & J-PEP & EIBF & \multicolumn{4}{c|}{Additional Variables} &
Covariates & Odds $PO_{ 1 k }$ \\

\hline \hline

1 & (2) &(5) & & $X_7 + X_{ 12 } + X_{ 13 }$ & $+ X_{ 20 }$ & & 13 & 1.00
\\

2 & ($>$5) & ($>$5) & $X_5 +$ & $X_7 + X_{ 12 } + X_{ 13 }$ & $+ X_{ 20 }$
& & 14 & 1.19 \\

3 & ($>$5) & (3) & $X_5 +$ & $X_7 + X_{ 12 } + X_{ 13 }$ & $+ X_{ 20 }$ &
$+ X_{ 42 }$ & 15 & 1.77 \\

4 & ($>$5) & (1) & & $X_7 + X_{ 12 } + X_{ 13 }$ & $+ X_{ 20 }$ & $+ X_{ 42
}$ & 14 & 1.94 \\

5 & ($>$5) & ($>$5) & & $X_7 + X_{ 12 } + X_{ 13 }$ & & & 12 & 2.30 \\

\multicolumn{8}{c}{} \\

\multicolumn{8}{c}{EIBF} \\

\hline

\multicolumn{3}{c|}{Ranking} & & & & & Number of & Posterior \\ \cline{1-3}

EIBF & J-PEP & Z-PEP & \multicolumn{4}{c|}{Additional Variables} &
Covariates & Odds $PO_{ 1 k }$ \\

\hline \hline

1 & ($>$5) &(4) & & $X_7 + X_{ 12 } + X_{ 13 }$ & $+ X_{ 20 }$ & $+ X_{ 42
}$ & 14 & 1.00 \\

2 & ($>$5) & ($>$5) & $X_5 +$ & $X_7 + X_{ 12 } + X_{ 13 }$ &$+ X_{ 20 }$ &
$+ X_{ 26 } + X_{ 42 }$ & 16 & 1.17 \\

3 & ($>$5) & (3) & $X_5 +$ & $X_7 + X_{ 12 } + X_{ 13 }$ &$+ X_{ 20 }$ & $+
X_{ 42 }$ & 15 & 1.30 \\

4 & ($>$5) & ($>$5) & & $X_7 + X_{ 12 } + X_{ 13 }$ &$+ X_{ 20 }$ & $+ X_{
39 } + X_{ 42 }$ & 15 & 1.44 \\

5 & (2) & (1) & & $X_7 + X_{ 12 } + X_{ 13 }$ & $+ X_{ 20 }$ & & 13 & 1.58
\\

\end{tabular}

\end{center}

\end{table}

\begin{table}[t!]

\caption{\textit{Comparison of the predictive performance of the PEP and
J-EPP methods, using the full and MAP models in the reduced model space of
the ozone data set.}}

\label{ozonetab5}

\vspace*{-0.1in}

\begin{center}

\begin{tabular}{c@{~}|@{~}c@{~}c@{~}c@{~}|c@{~~~}c@{~~~}c@{~~~}c@{~}}

& & & & \multicolumn{4}{c}{$RMSE^*$} \\ \cline{5-8}

Model & $d_\ell$ & $R^2$ & $R^2_{ adj }$ & J-PEP & Z-PEP & J-EPP & Jeffreys
Prior \\

\hline \hline

Full & 22 & 0.8500 & 0.8392 & 0.5988 & 0.5935 & 0.6194 & 0.5972 \\

& & & & (0.0087) & (0.0097) & (0.0169) & (0.0104) \\

\hline

J-PEP MAP & \ 9 & 0.8070 & 0.8016 & 0.5975 & 0.6161 & 0.7524 & 0.6165 \\

& & & & (0.0063) & (0.0051) & (0.0626) & (0.0052) \\

\hline

Z-PEP MAP & 13 & 0.8370 & 0.8303 & 0.5994 & 0.5999 & 0.6982 & 0.5994 \\

& & & & (0.0071) & (0.0060) & (0.0734) & (0.0049) \\

\hline

EIBF MAP & 14 & 0.8398 & 0.8326 & 0.6182 & 0.5961 &
0.6726 & 0.5958 \\

& & & & (0.0066) & (0.0072) & (0.0800) & (0.0061) \vspace*{0.1in} \\

\multicolumn{8}{c}{\underline{Comparison with the full model (percentage
changes)}} \vspace*{0.1in} \\

& & & & \multicolumn{4}{c}{$RMSE$} \\ \cline{5-8}

Model & $d_\ell$ & $R^2$ & $R^2_{ adj }$ & J-PEP & Z-PEP & J-EPP & Jeffreys
Prior \\

\hline

J-PEP MAP & $-59\%$ & $-5.06\%$ & $-4.48\%$ & $-0.22\%$ & $+3.81\%$ &
$+21.5\%$ & $+3.23\%$ \\

Z-PEP MAP & $-41\%$ & $-1.50\%$ & $-1.06\%$ & $+0.10\%$ & $+1.01\%$ &
$+12.7\%$ & $+0.37\%$ \\

EIBF MAP & $-36\%$ & $-1.20\%$ & $-0.78\%$ & $+3.24\%$ & $+0.44\%$ &
$+10.9\%$ & $-0.23\%$ \\

\end{tabular}

\end{center}

\footnotesize

\textit{Note:} $^*$Mean (standard deviation) over 50 different split-half
out-of-sample evaluations.

\normalsize

\end{table}

\begin{figure}[t!]

\centering

\caption{\textit{Distribution of $RMSE$ across 50 random partitions of the
ozone data set, for the Jeffreys-prior, J-EPP, Z-PEP and J-PEP methods, in
(a) the full model, (b) the Z-PEP MAP model, (c) the J-EPP MAP model, and
(d) the J-PEP MAP model.}}

\label{armse1}

\vspace*{0.1in}

\psfrag{PEPP_J}[c][c][0.9]{\footnotesize \sf J-PEP \hspace{1em}}

\psfrag{PEPP_Z}[c][c][0.9]{\footnotesize \sf Z-PEP \hspace{1em}}

\psfrag{EPP_J}[c][c][0.9]{\footnotesize \sf J-EPP \hspace{1em}}

\psfrag{Jeffreys}[c][c][0.9]{\footnotesize \textsf{Jeffreys} \hspace{1em}}

\includegraphics[ scale = 0.65 ]{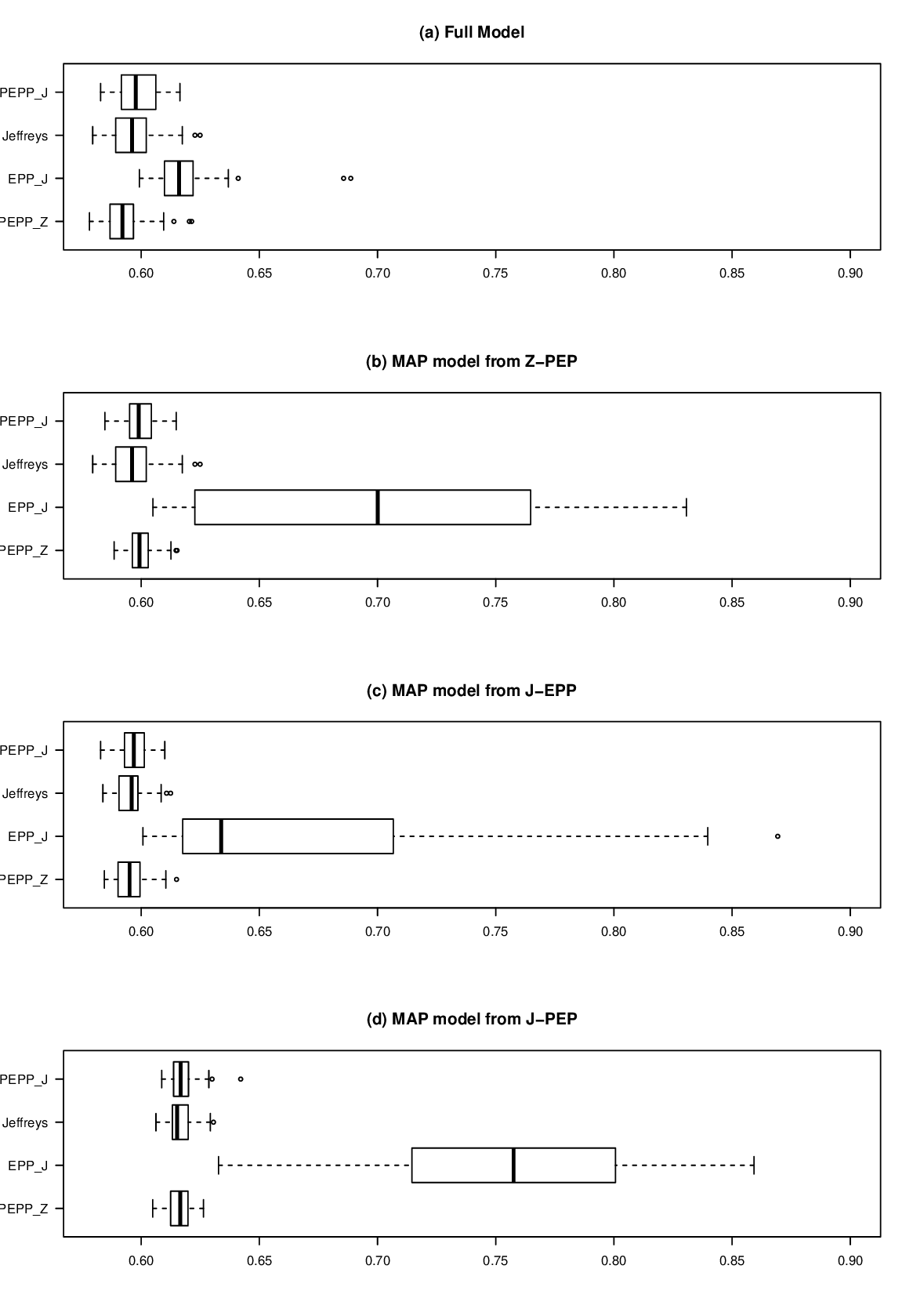}

\end{figure}

\subsubsection{Comparison of predictive performance}

\label{comparison-predictive-performance}

Here we examine the out-of-sample predictive performance of J-PEP, Z-PEP
and J-EPP on the full model and the three MAP models found by each method
implemented in the previous analysis. To do so, we randomly partitioned the
data in half 50 times, referring to the partition sets as modeling
($\mathbb{ M }$) and validation ($\mathbb{ V }$) subsamples. For each
partition, we generated an MCMC sample of $T = 1,000$ iterations from the
model of interest $M_\ell$ (fit to the modeling data $\mathbb{ M }$) and
then computed the following measure of predictive accuracy:
\begin{equation} \label{RMSE}
RMSE_\ell = \sqrt{ \frac{ 1 }{ T } \sum_{ t = 1 }^T \frac{ 1 }{ n_V }
\sum_{ i \in \mathbb{ V } } \big( y_i - \widehat{ y }_{ i | M_\ell }^{ \,
( t ) } \big)^2 } \, ,
\end{equation}
the root mean squared error for the validation data set $\mathbb{ V }$ of
size $n_V = \lceil \, \frac{ n }{ 2 } \, \rceil$; here $\widehat{ y }_{ i |
M_\ell }^{ \, ( t ) } = \mt{ X }_{ \ell ( i ) } \, \dn{ \beta }_\ell^{ ( t
) }$ is the predicted value of $y_i$ according to the assumed model $\ell$
for iteration $t$, $\dn{ \beta }_\ell^{ ( t ) }$ is the vector of model
$M_\ell$ parameters for iteration $t$ and $\mt{ X }_{ \ell ( i ) }$ is the
$i$th row of the matrix $\mt{ X }_\ell$ of model $M_\ell$.

Results for the full model and the MAP models are given in Table
\ref{ozonetab5}. For comparison purposes, we have also included the
split-half $RMSE$ measures for these three models using predictions based
on direct fitting of model (\ref{new2-1}) with the independence Jeffreys
prior $f ( \dn{ \beta }_\ell \, , \sigma_\ell^2 ) \propto \frac{ 1 }{
\sigma_\ell^2 }$, which can be viewed as a parametric bootstrap approach
around the MLE for $\dn{ \beta }_\ell$ and the unbiased estimate of
$\sigma_\ell^2$, allowing for variability based on their standard errors.

Table \ref{ozonetab5} shows that all $RMSE$ values for the PEP and
Jeffreys-prior approaches are similar, indicating that PEP provides
predictive performance equivalent to that offered by the Jeffreys prior;
also note that the PEP and the Jeffreys-prior $RMSE$s for the two PEP MAP
models are close to the corresponding values for the full model, which has
considerably higher dimension. (The point of this comparison is to
demonstrate that the PEP approach, which can be used for variable
selection, achieves a level of predictive accuracy comparable to that of
the Jeffreys-prior approach, which cannot be used for variable selection
because of its impropriety.)

In contrast, with the J-EPP approach the $RMSE$ values of all four models
are noticeably higher than the corresponding values for the Jeffreys-prior
and PEP approaches. Figure \ref{armse1} provides the explanation, by
showing the distribution of $RMSE$ values across the 50 random data splits,
for each of the four implementations in each of the four models examined in
Table \ref{ozonetab5}. The J-EPP approach is predictively unstable as a
function of its training samples, an undesirable behavior that PEP's
performance does not share.

To round out the full picture, we also examined the predictive ability of
median probability (MP) models. The MP models under both the Z-PEP and EIBF
approaches turned out to be the same as the corresponding MAP models. 
Under the J-PEP approach, the MP model was of a slightly higher dimension
than the corresponding J-PEP MAP model (it coincided with the Z-PEP MAP
model except for the addition of covariate $X_{ 20 }$). Thus, in this
empirical study, the predictive performance of MP models was similar to
that of the MAP models depicted in Figure \ref{armse1}.  

\subsection{A simulation comparison with other methods}

\label{sim_st}

We conclude our experimental results with a simulation comparison of Z-PEP
with a variety of other variable-selection and shrinkage methods, as
follows: the $g$-prior \citep{zellner_76}, the hyper-$g$ prior
\citep{liang_etal_2008}, non-local priors \citep{johnson_rossell_2010}, the
LASSO (least absolute shrinkage and selection operator; \citet{tibshirani})
and SCAD (smoothly-clipped absolute deviations; \citet{fan_li_2001}). (Note
that LASSO and SCAD are not focused on model selection but on the shrinkage
of coefficients; this feature can produce good point estimates and
prediction, but it precludes selection of a best subset (for a similar
argument see \cite{womack_2014}.)) For the $g$-prior and hyper-$g$ prior we
used the \texttt{BAS} package in \texttt{R}; we set $g = n$ in the former to
correspond to the unit information prior \citep{kass_wasserman_95}, and with
the hyper-$g$ prior we used $\alpha = 3$, as recommended by
\cite{liang_etal_2008}. For the implementation of SCAD and LASSO we used the
\texttt{R} packages \texttt{ncvreg} and \texttt{parcor}, respectively; in
both cases the shrinkage parameters were specified using
10-fold-cross-validation. Finally, following \cite{johnson_rossell_2012},
for the non-local prior densities we used the product moment (pMOM)
densities of the first and second orders ($r = 1$ and 2, respectively) and
the product inverse moment (piMOM) density, as implemented in the \texttt{R}
package \texttt{mombf}. All of these \texttt{R} routines are available at
\url{http://cran.r-project.org/web/packages}.  

We compared all eight methods on the Nott-Kohn case study (\ref{new3-1},
\ref{ss}) with the 50 additional data sets examined in Section
\ref{sensitivity-analysis-for-n-star}, by calculating the out-of-sample
predictive $RMSE$ (equation \ref{RMSE}), using for each sample an
additional simulated set of data of the same size ($n_V = 50$). The $RMSE$
was computed for each data set based on posterior estimates of the MAP
model for each variable-selection method. For Z-PEP, the $g$-prior and the
hyper-$g$ prior we used the posterior means; for the non-local priors we
employed the posterior modes; and for the shrinkage methods we used the
final estimates produced. Figure \ref{ex3_rmse} depicts the distribution of
$RMSE$ across the 50 samples for all methods under comparison, and Figure
\ref{ex3_diff_rmse} presents the distribution of pairwise differences
between the Z-PEP $RMSE$s and those of the other methods. It is evident
that Z-PEP exhibited somewhat better predictive performance in relation to
all the other approaches in this simulation study: the proportions of data
sets in which Z-PEP had smaller $RMSE$s were (56\%, 60\%, 62\%, 64\%, 66\%,
70\% and 76\%) in relation to (the hyper-$g$ prior, pMOM with $r = 1$,
SCAD, LASSO, the $g$-prior, piMOM and pMOM with $r = 2$), respectively.

\begin{figure}[p!]

\centering

\caption{\textit{Boxplots (over 50 simulated Nott-Kohn samples) comparing
the out-of-sample $RMSE$s for eight variable selection and shrinkage
methods.}}

\label{ex3_rmse}

\vspace*{-0.1in}

\includegraphics[ scale = 0.5, angle = -90 ]{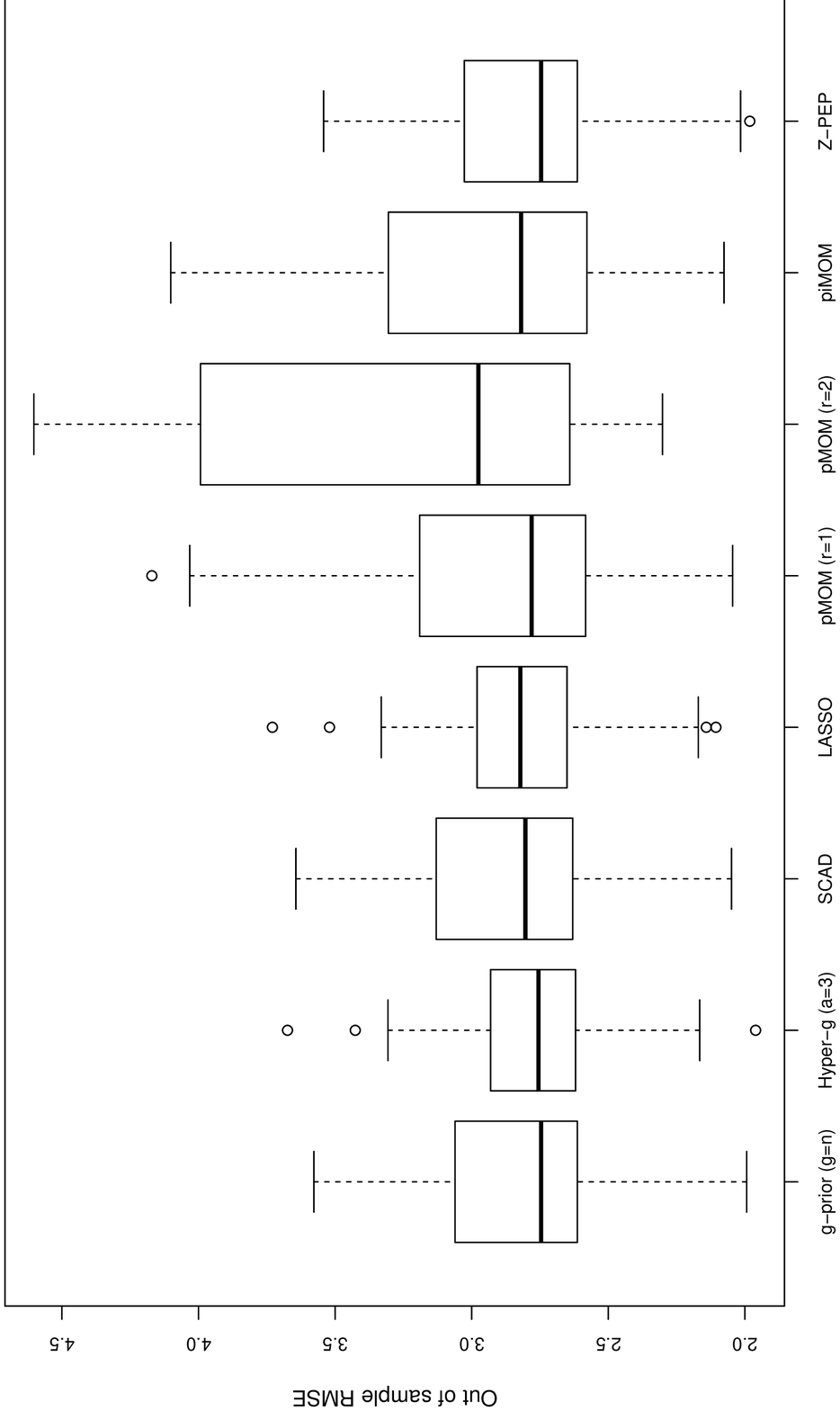}

\caption{\textit{Boxplots (over 50 simulated Nott-Kohn data sets) of the
differences between the out-of-sample $RMSE$s of seven variable selection
and shrinkage methods and the out-of-sample $RMSE$ of Z-PEP.}}

\label{ex3_diff_rmse}

\includegraphics[ scale = 0.5, angle = -90 ]{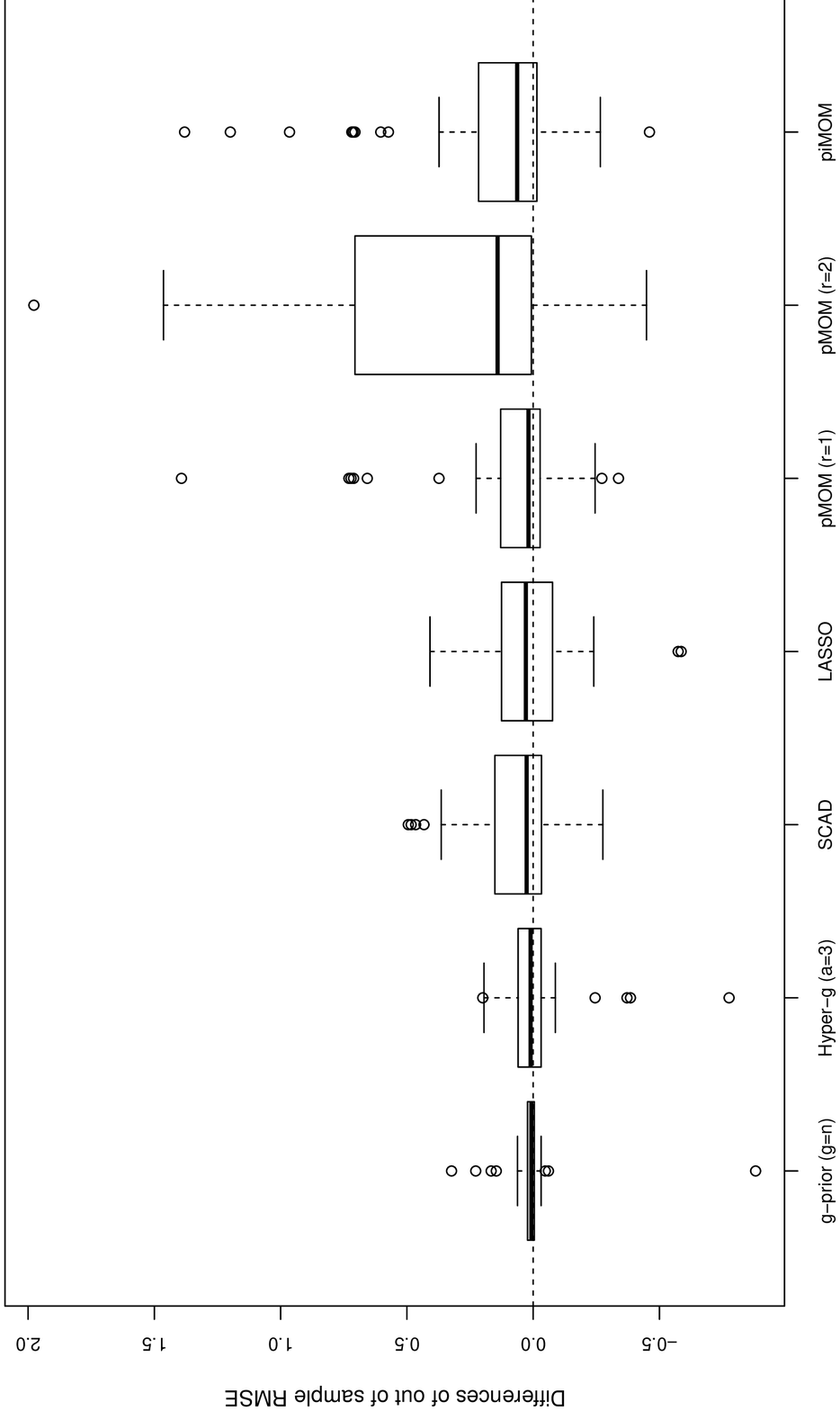}

\end{figure}

\begin{figure}[p!]

\caption{\textit{Proportions (across 50 simulated Nott-Kohn data sets) of
instances in which each covariate was identified with a non-zero effect by
the eight variable-selection and shrinkage methods under consideration.}}

\label{ex3_prop_inc_prob}

\centering

\vspace*{-0.325in}

\includegraphics[ scale = 0.55, angle = -90 ]{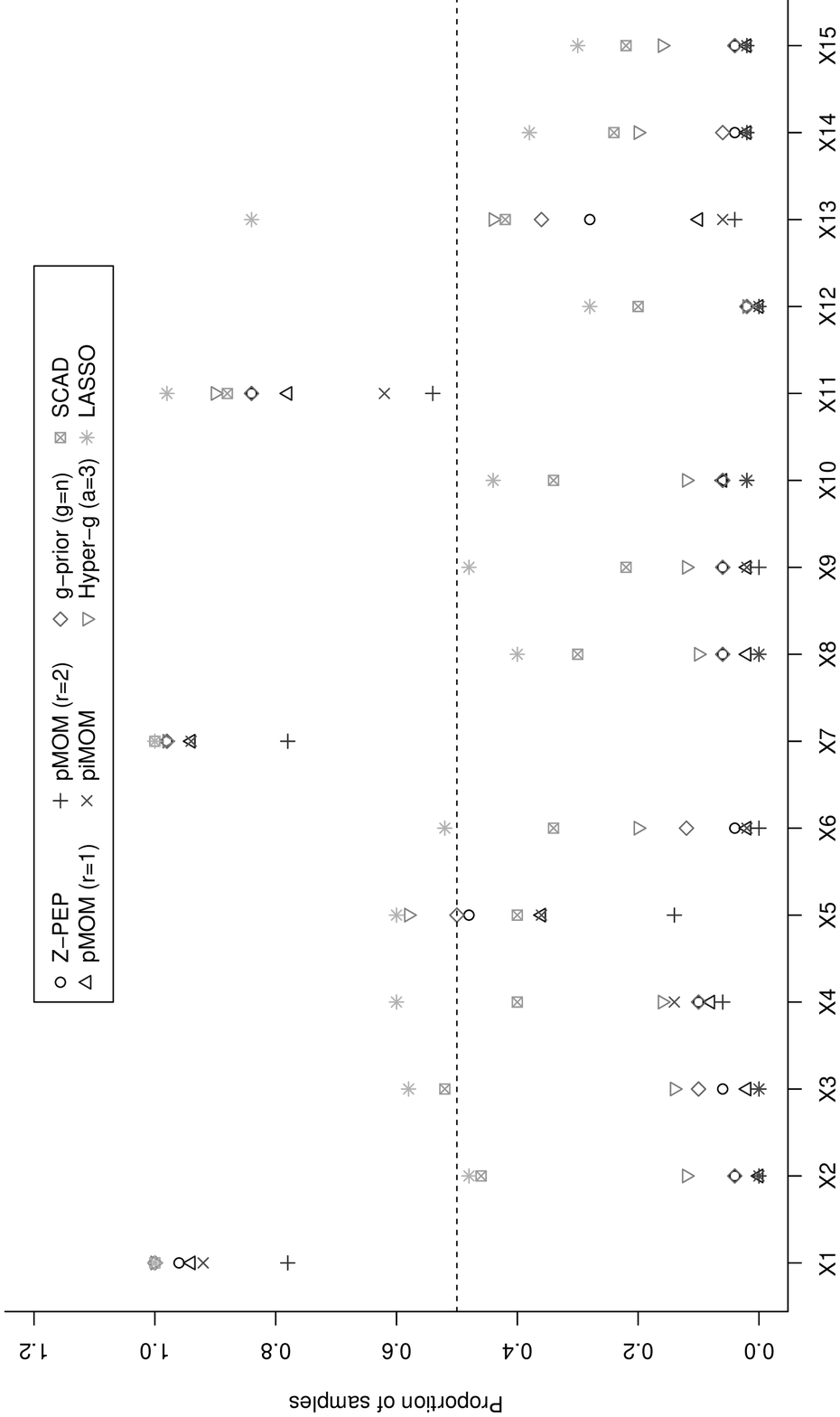}

\caption{\textit{Averages of posterior inclusion probabilities (across 50
simulated Nott-Kohn samples) for the six variable-selection methods under
consideration.}}

\label{ex3_mean_inc_prob}

\vspace*{-0.325in}

\includegraphics[ scale = 0.55, angle = -90 ]{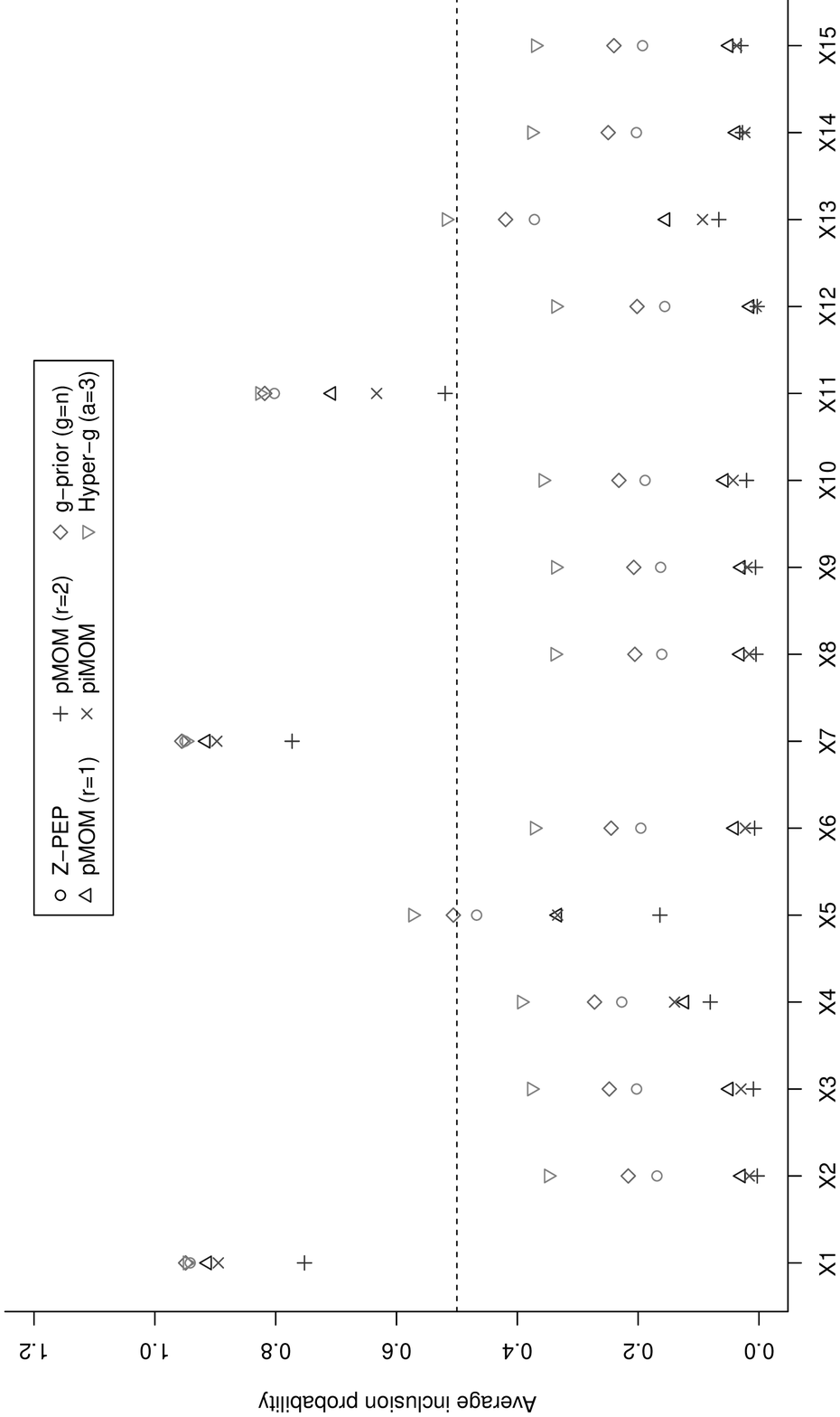}

\vspace*{-0.15in}

\end{figure}

We also examined all eight of the methods compared here with respect to
their variable-selection performance, in two ways: Figure
\ref{ex3_prop_inc_prob} presents the proportions (across the 50 simulated
Nott-Kohn data sets) of instances in which each covariate was identified
with a non-zero effect (i.e., the cases where (a) the effect was not
restricted to zero in the shrinkage methods and (b) the posterior inclusion
probabilities were found to be greater than 0.5 in the variable-selection
methods), and Figure \ref{ex3_mean_inc_prob} gives the mean posterior
variable-inclusion probabilities across the 50 replicate data sets. In the
following, we define the convention for each method that a variable is
\textit{selected} if either its proportion (Figure \ref{ex3_prop_inc_prob})
or mean inclusion probability (Figure \ref{ex3_mean_inc_prob}) exceeded
0.5.  Under this convention, all methods did well in finding the ``true"
covariates $X_1, X_7$ and $X_{ 11 }$, and in avoiding selection of the
``false" covariates $X_2, X_8, X_9, X_{ 10 }, X_{ 12 }, X_{ 14 }$ and $X_{
15 }$. Predictors $X_5$ and $X_{ 13 }$, which are built into the
data-generating model with smaller coefficients than those given to $X_1$
and $X_7$, were correctly selected only by the LASSO and the hyper-$g$
prior (in the case of $X_{ 5 }$ and $X_{ 13 }$) and the $g$-prior (in the
case of $X_5$); covariates $X_3, X_4$ and $X_6$, which have data-generating
coefficients of 0 in the Nott-Kohn setting, were falsely selected by the
LASSO and SCAD. Evidently the LASSO achieves its superior true-positive
behavior in this case study only at the expense of an undesirably high
false-positive rate. Z-PEP's selection rates were nearly 50\% for $X_{ 5 }$
and 30--40\% for $X_{ 13 }$, making it competitive with (though somewhat
inferior to) the hyper-$g$ prior and the $g$-prior on variable-selection
behavior in this example, but (as noted above) this is balanced by Z-PEP's
better predictive performance.

\section{Discussion}

\label{sec_discussion}

The major contribution of the research presented here is to simultaneously
produce a minimally-informative prior and sharply diminish the effect of
training samples on previously-studied expected-posterior-prior (EPP)
methodology, resulting in a prior for variable selection in Gaussian
regression models with very good variable-selection accuracy and excellent
out-of-sample predictive behavior. As noted in the introduction, one of the
main advantages of EPPs is that they achieve prior compatibility across
models; the proposed prior in this paper also has this property (in contrast
to other priors that have been developed in the Bayesian model selection
literature, such as mixtures of $g$-priors), and in addition our prior has a
unit-information structure and is robust to the size of the training sample.
By combining ideas from the power-prior approach of \cite{ibrahim_chen_2000}
and the unit-information prior of \cite{kass_wasserman_95}, we raise the
likelihood involved in EPP to a power proportional to the inverse of the
training sample size, resulting in prior information equivalent to one data
point. In this way, with our power-expected-posterior (PEP) methodology, the
effect of the training sample is minimal, regardless of its sample size, and
we can choose training samples with size $n^*$ equal to the sample size $n$
of the original data, thus eliminating the need for training samples
altogether.  This choice promotes stability of the resulting Bayes factors,
removes the arbitrariness arising from individual training-sample
selections, and avoids the computational burden of averaging over many
training samples. Additional advantages of our approach over methods that
depend on training samples include the following.

\begin{itemize}

\item

In variable-selection problems in linear models, the training data refer to
both $y$ and $X$. Under the base-model approach (see Section
\ref{expected-posterior-priors}), we can simulate training data $y^*$
directly from the prior predictive distribution of a reference model, but
we still need to consider a subsample $X^*$ of the original design matrix
$X$. The number of possible subsamples of $X$ can be enormous, inducing
large variability, since some of those subsamples can be highly influential
for the posterior analysis. By using our approach, and working with
training-sample sizes equal to the size of the full data set, we avoid the
selection of such subsamples by choosing $X^* = X$.

\item

The number $p$ of covariates in the full model is usually regarded as
specifying the minimal training sample. This selection makes inference
within the current data set coherent, but the size of the minimal training
sample will change if additional covariates are added, meaning that the EPP
distribution will depend incoherently on $p$. Moreover, if the data derive
from a highly structured situation (such as an analysis of covariance in a
factorial design), most choices of a small part of the data to act as a
training sample would be untypical. Finally, the effect of the minimal
training sample will be large in settings where the sample size $n$ is not
much larger than $p$. This type of data set is common in settings (in
disciplines such as bioinformatics and economics) in which (i) cases (rows
in the data matrix) are expensive to obtain (bioinformatics) or limited by
the number of available quarters of data (economics) but (ii) many
covariates are inexpensive and readily available once the process of
measuring the cases begins.

\end{itemize}

It is worth noting that our method, which is intended for settings in which
there is a fixed covariate space of $p < n$ predictor variables, works in a
totally different fashion than fractional Bayes factors. In the latter, the
likelihood is partitioned based on two data subsets; one is used for
building the prior within each model and the other is employed for model
evaluation and comparison. In contrast, with our approach, the original
likelihood is used only once, for simultaneous variable selection and
posterior inference. Moreover, the fraction of the likelihood (power
likelihood) --- used in the expected-posterior expression of our prior
distribution --- refers solely to the imaginary data coming from a prior
predictive distribution based on the reference model.

Our PEP approach can be implemented under any baseline prior choice; results
using the $g$-prior and the independence Jeffreys prior as baseline choices
are presented here. The conjugacy structure of the $g$-prior in Gaussian
linear models makes calculations simpler and faster, and also offers
flexibility in situations in which non-diffuse parametric prior information
is available. When, by contrast, strong information about the parameters of
the competing models external to the present data set is not available, the
independence Jeffreys baseline prior can be viewed as a natural choice, and
noticeable computational acceleration is provided by the fact that the
posterior with the Jeffreys baseline is a special case of the posterior with
the $g$-prior as baseline. In the Jeffreys case we have proven that the
resulting variable-selection procedure is consistent; we conjecture that the
same is true with the $g$-prior, but the proof has so far been elusive.

From our empirical results in two case studies --- one involving simulated
data, the other a real example based on the prediction of air pollution
levels from meteorological covariates --- we conclude that our method

\begin{itemize}

\item

is systematically more parsimonious (under either baseline prior choice)
than the EPP approach using the Jeffreys prior as a baseline prior and
minimal training samples, while sacrificing no desirable performance
characteristics to achieve this parsimony;

\item

is robust to the size of the training sample, thus supporting the use of
the entire data set as a ``training sample" --- thereby eliminating the
need for random sampling over different training sub-samples, which
promotes inferential stability and fast computation; 

\item

identifies maximum a-posteriori models that achieve better out-of-sample
predictive performance than that attained by a wide variety of
previously-studied variable-selection and coefficient-shrinkage methods,
including standard EPPs, the $g$-prior, the hyper-$g$ prior, non-local
priors, the LASSO and SCAD; and

\item

has low impact on the posterior distribution even when $n$ is not much
larger than $p$.

\end{itemize}

Our PEP approach could be applied to any prior distribution that is defined
via imaginary training samples. Additional future extensions of our method
include implementation in generalized linear models, where computation is
more demanding.

\section*{Supplementary material}

The Appendix is available in a web supplement at ***.

\section*{Acknowledgments}

We wish to thank the Editor-in-Chief, an Editor, an Associate Editor and two
referees for comments that greatly strengthened the paper. This research has
been co-financed in part by the European Union (European Social Fund-ESF)
and by Greek national funds through the Operational Program ``Education and
Lifelong Learning" of the National Strategic Reference Framework
(NSRF)-Research Funding Program: Aristeia II/PEP-BVS.

\section*{Abbreviations used in the paper}

BIC = Bayesian information criterion, EIBF = expected intrinsic Bayes
factor, EPP = expected-posterior prior, IBF = intrinsic Bayes factor, J-EPP
= EPP with Jeffreys baseline prior, J-PEP = PEP prior with Jeffreys-prior
baseline, LASSO = least absolute shrinkage and selection operator, PEP =
power-expected-posterior, Z-PEP = PEP prior with Zellner $g$-prior baseline,
SCAD = smoothly-clipped absolute deviations.

\bibliographystyle{agsm}

\bibliography{biblio3}

\end{document}